\documentclass[letterpaper, 11pt]{article}

\usepackage{graphicx}
\usepackage{latexsym}
\usepackage{amsmath} 	
\usepackage{amssymb}
\usepackage{rotating}
\usepackage{amsthm}
\usepackage{fullpage}

\topmargin -5mm \textheight 240mm \textwidth 165mm

\newtheorem{theorem}{Theorem}
\newtheorem{lemma}{Lemma}
\newtheorem{proposition}{Proposition}
\newtheorem{corollary}{Corollary}
\newtheorem{definition}{Definition}

%%\declareMathOperator{NL}{\NL}

\newcommand{\Tr}{\mbox{Tr}}

\title{\bf A Limit on Non-Locality Distillation}

\author{Dejan D. Dukaric$^{\star}$\qquad Stefan Wolf$^{\dag}$ \\[1cm] \normalsize\sl $^{\star}$ Department of Computer Science \& Department of Physics \\ \normalsize\sl ETH Zurich, \textsc{Switzerland}\\[-0.1cm]  \normalsize \texttt{ddukaric@ethz.ch}  \\[0.5cm] \normalsize\sl $^{\dag}$ Department of Computer Science\\ \normalsize\sl ETH Zurich, \textsc{Switzerland}\\[-0.1cm] \normalsize \texttt{wolf@inf.ethz.ch}\\[0.5cm]}

%\clubpenalty 10000
%\widowpenalty 10000

%\maketitle

%\author{Dejan D.\ Dukaric\footnote{Computer Science Department, ETH Zurich. \{ddukaric@,wolf@inf.\}ethz.ch\ .}\qquad Stefan Wolf$\, ^*$}

%\author{Dejan D.\ Dukaric\footnote{Department of Computer Science \& Department of Physics, ETH Zurich. \texttt{ddukaric@ethz.ch} .} \qquad Stefan Wolf\footnote{Department of Computer Science, ETH Zurich. \texttt{wolf@inf.ethz.ch} .}}

%\title{\vspace*{0cm}\bf A Limit to Non-Locality Distillation}

%\date{}

\begin{document}

\maketitle

%\vspace{-1cm}

\begin{abstract}
\noindent
Non-local correlations are not only a fascinating feature of quantum theory, but 
 an interesting resource for information processing, for instance in
communication-complexity theory or  cryptography. An important question in this context
is whether the resource can be distilled: Given a large 
amount of {\em weak\/} non-local correlations, is there a method to obtain {\em strong\/} non-locality  using  local operations and shared randomness? 
We partly answer this question by {\em no\/}: symmetric (or isotropic) 
CHSH-type non-locality which is not ``super-strong,'' but achievable by  measurements on certain quantum states, has at best very limited distillability by any non-interactive classical method. This strongly extends and generalizes what was previously known, namely 
that there are two limits that cannot be overstepped: The Bell and Tsirelson bounds. 
Our results imply that there must be an infinite number of such bounds.
  
A noticeable feature of our proof of this --- purely classical --- statement is that it is ``quantum mechanical'' in the 
sense that (both novel and known) facts from quantum theory are used in a crucial way to obtain the claimed 
results. One of these results --- of independent interest --- is that certain mixed entangled states 
cannot be distilled at all without communication. Weaker statements, namely {\em limited\/} distillability,
have been known for {\em Werner states}.  
\end{abstract}

\newpage

\section{Motivation and Main Result}

\subsection{Non-Locality}

\noindent
{\it What is non-locality?}

\noindent
Linearity of quantum physics allows for a phenomenon called {\em entanglement\/}: The joint state 
of two  possibly distant  parts of a system  cannot be described by  the states 
of the parts. Even when the two parts are measured in a way that permits no signal transmission from one measurement event
to the other | such two events are  called
{\em space-like separated\/} |, the outcomes can be correlated. Although such a correlation 
may be surprising from a {\em physical\/} point of view, it need not be in terms of {\em information\/}:
The distant particles could have shared classical information since they were generated together.
Actually, this is what motivated Einstein, Podolsky, and Rosen \cite{epr} in 1935
to postulate that quantum physics was incomplete and should be augmented by this information,
called {\em hidden variables}, determining the outcomes of all potential measurements.
The true surprise came about $30$  years later, when John Bell \cite{bellInequality}  showed that quantum correlation 
{\em cannot\/} be explained by shared classical information in principle. 

\ \\ 
\noindent
{\it How can this be? Measurement outcomes are classical random variables, are they not?}

\noindent
It can be if one looks at correlations of two-partite {\em systems\/}
with a given input/output behavior instead of just random 
variables. This makes sense physically, too: the input corresponds the the choice of the measurement basis.

A {\em bipartite system\/} consists of two ends, usually called Alice and Bob, at each of which an input ($x$ and $y$)
is given and an output obtained ($a$ and $b$).
There are different degrees of correlation between the two ends: 
A system is called {\em signaling\/} if it allows for message transmission between Alice and Bob, and {\em local\/} 
if its behavior can be explained by shared classical information. Now, {\em Bell's theorem}, which 
has been celebrated as one of the ``most profound discoveries in science'' (Stapp), implies 
that certain entangled quantum states behave, when measured, non-locally (but, at the same 
time, non-signaling). Non-local behavior is, hence, a  reality; and an experimentally 
verified one. 

\ \\ 
\noindent
{\it What is known about non-locality, besides that it exists?}

\noindent
Non-locality  is being intensively studied, not only as a fascinating {\em phenomenon}, but also   as a {\em resource\/} 
for information processing. 
Many central properties  remain unknown; an example being whether there 
is a method to ``distill'' or ``amplify'' non-local correlations.
In this note, we give a partial  answer to the question: 
A specific but central  
type of non-locality  has at most very limited distillability.

\ \\ 
\noindent
{\it What can non-locality be useful for?\/} 

\noindent
Two examples: First, it can {\em replace\/} 
communication in certain contexts | although it does not {\em allow\/} for communication. Second, 
it implies cryptographic confidentiality. 
The fact that the outputs of a {\em non\/}-local system are {\em not\/} completely pre-determined
as classical information  
 implies lack of knowledge by any possible 
adversary~\cite{kent}!
Note that the resulting cryptographic security follows directly from the observed 
joint correlations, is {\em device-independent}, and does not even rely on the fact 
that the adversary is limited by quantum physics; it only depends on the non-signaling 
postulate of relativity. 

\ \\ 
\noindent
{\it But then, why is the problem of non-locality distillation important and interesting?}

\noindent
Generally, the stronger the non-local correlations, the more useful they are. 
For instance, it has been shown~\cite{world} that the existence of non-locality which is super-quantum to some extent
would collapse communication complexity: Every Boolean function could be computed in a distributed
way with just a constant number of communicated bits. A second example is from cryptography: The confidentiality
that can be derived from non-locality is stronger if the non-locality is. 
It is, by these reasons,
a  natural question whether it can be distilled (see also \cite{nonLocalCorr} where non-locality distillation is mentioned to be an interesting problem which should be investigated).

\ \\ 
\noindent
{\it For which types of non-locality do the results apply? Are there other types which 
are distillable?}

\noindent
We investigate a special type of non-locality, namely {\em isotropic approximations 
to non-local boxes}. The latter are idealizations of systems violating the most prominent 
Bell inequality called CHSH (after Clauser, Horne, Shimony, and Holt) . Isotropic
boxes are symmetric and, therefore, an important special case. They are, actually, 
the worst case when it comes to distillability: If {\em they\/} can be distilled,
then {\em all\/} boxes of the same strength can. Our result states that the distillability 
of such systems is, at most, very limited (and suggests that it is, in fact, zero). This 
complements a result of~\cite{FosterBox}, stating that certain (non-isotropic) systems 
can be distilled. This may indicate that CHSH violation is a good measure for non-locality 
only in the isotropic case.

\subsection{Our Contribution}
The most important and prominent 
 type of non-locality is  {\em CHSH-type\/} non-locality~\cite{CHSH}. There are two main reasons why we investigate this type of non-locality. On the one hand, it is the only possible non-locality of systems
with binary inputs and outputs. On the other hand, for binary input/output systems the non-locality can be quantified by a single number (see equation (\ref{CHSHeqn})), whereas for systems with more inputs/outputs it is not clear how to quantify non-locality by a single number and therefore, to find a meaningful definition of non-locality distillation for such systems.
The maximal possible CHSH-type 
non-locality has been modeled as a {\em Popescu-Rohrlich (PR)-box\/}~\cite{NonlocalityAxiom} or {\em non-local (NL) box\/}. Such a PR-box has input bits $x, y \in \{0,1\}$ and output bits $a,b \in \{0,1\}$ on Alice's and Bob's side, respectively, and computes the relation
\[
x \wedge y = a \oplus b \ .
\]
In other words, Alice and Bob have to output different bits if and only if both of them obtain $1$ as their input. For all other input pairs they have to output the same bit.
Furthermore, the output bits $a$ and $b$ must be unbiased (see Section \ref{sectionSystems} for a definition), otherwise the bipartite system is signaling.
It has been shown that a system which has this behavior with probability $>75\%$, averaged over all input pairs, is non-local; on the other hand, entangled
quantum states can achieve roughly $85\%$. We call a system a $q$-approximation to the PR-box if it satisfies the above described behavior, averaged over all input pairs of Alice and Bob, with probability $q$. 
The question we address in this paper is: Given a number of 
$q$-approximations to the PR-box, is it possible to obtain, without communication, a $p$-approximation 
for some $p>q$? In other words, 
we investigate {\em Non-Locality Distillation Protocols\/} (NDPs for short), which are two-party protocols between Alice and Bob. An NDP takes as input $x, y \in \{0,1\}$ and outputs $a, b \in \{0,1\}$ such that a PR-box is approximated with probability $p$. Alice and Bob are allowed to perform arbitrary local operations, have shared randomness, can use $q$-approximations of PR-boxes (where $q < p$), but are not allowed to communicate (see Section \ref{NDP} for a detailed description of NDPs).
We will prove the following impossibility result on the existence of NDPs. 
 \\

\noindent
{\bf Main Result.}
{\em There exists no non-locality distillation protocol which uses isotropic $q$-approxi\-mations of PR-boxes, and has the same input/output behavior as a $p$-approximation of a PR-box such that $p > q + g(q)$ and
\begin{equation}
\label{qm}
 \frac{3}{4} \leq q \leq \frac{2+\sqrt{2}}{4} ~,
\end{equation}
where $g(q)$ is the distillability gap (see Theorem \ref{resultGeneralDistillation}).
}
\newline	

This statement is the first result about non-locality distillation besides Bell's and Tsirelson's 
limits, which no algorithm can cross (Section~\ref{trivialBounds} covers the known bounds).			
Why the restriction in the statement? The reason lies in a particularity of its proof: it is ``quantum-physical''; 
this is remarkable since the statement is purely classical.\footnote{By a ``quantum-physical proof'', 
we do not mean that whether the statement is correct or not depends on whether quantum theory 
is an accurate description of nature or not. We mean that we make crucial use of mathematical 
theorems that deal with quantum-physical notions and that make sense, and have been proven,
in the context of quantum theory. Therefore, the proof depends only on the soundness of quantum mechanics as a mathematical theory.}
An example of such a statement is that the {\em fully entangled fraction\/} | i.e., proximity to a maximally entangled state |  of 
certain bi-partite states cannot be increased non-interactively. Such a result has not been known before
for any (non-trivial) state. For a  different class of states, the {\em Werner states}, it has been shown that the 
possibility of non-interactive purification is {\em limited\/}; for our states, it is {\em zero}.
\ \\

\noindent
In a nutshell, our proof works as follows: 
\begin{itemize}
\item
The proof is by contradiction: 
We assume that an NDP exists which uses a finite number of isotropic $q$-approximations of PR-boxes | 
for some $q$ that is quantum-physically achievable, i.e., satisfying (\ref{qm}) | and has the same input/output behavior as
a $p$-approximation of a PR-box, for some $p > q + g(q)$. 

\item
This classical NDP can be simulated by local quantum operations, where we think of the $q$-boxes as coming from measuring a certain mixed quantum state $\Omega_{\alpha}$.

\item 
Measurements can be postponed to the end, and the measurement operators
 on Alice's and Bob's sides can be simultaneously block-diagonalized, in blocks of size at most $2 \times 2$. It is, therefore,
 sufficient to consider measurements on qubits only.

\item
Using the following two facts, we conclude that the CHSH-type non-locality resulting from the quantum NDP
is not much stronger than the original.

\begin{enumerate}
\item
It is impossible to obtain, from many copies of $\Omega_{\alpha}$, a state with a higher fully entangled fraction (i.e., proximity to a fully entangled state).

\item
By transforming this result from a statement about the impossibility of entanglement distillation to a statement about the impossibility of non-locality distillation we lose something (but not too much, namely the distillability gap $g(q)$) as there is no one-to-one correspondence between non-locality (CHSH violation) and entanglement (fully entangled fraction).
\end{enumerate}

\end{itemize}

Note that the statement we prove resembles a similar result on {\em correlations}, i.e., 
the impossibility of non-interactive correlation distillation~\cite{DrYang}: Without 
communication, no highly correlated bit can be obtained from an arbitrary number 
of weakly correlated bits.

\section{Definitions and Preliminaries}

\subsection{Systems and Correlations}
\label{sectionSystems}

In order to explain the essence of non-locality, we introduce the notion of {\em two-partite 
systems}, defined by their joint input-output behavior $P(a,b \vert x,y)$ with inputs $x \in \mathcal{X}$, $y \in \mathcal{Y}$ and outputs $a \in \mathcal{A}$, $b \in \mathcal{B}$ (see Figure \ref{boexli}). A party receives its output immediately after giving its input, independently of whether the other has given its input already. Depending on the concrete form of $P(a,b \vert x, y)$, the input/output behavior can be simulated by a physical process, namely by measurements on entangled quantum states, where $x$ and $y$ denote which measurements Alice and Bob select and $a$ and $b$ are the corresponding measurement results of Alice and Bob. 

\begin{figure}[h]
\begin{center}
\input{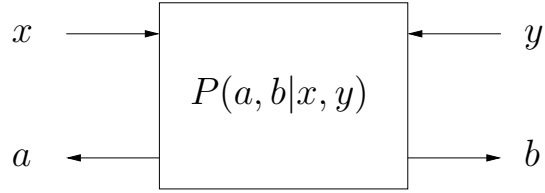}
\end{center}
\caption{A two-partite system.}
\label{boexli}
\end{figure}

If we have binary inputs and outputs, i.e., $\mathcal{X} = \mathcal{Y} = \mathcal{A} = \mathcal{B} = \{0,1\}$, the behavior of such as system can be represented by a four-valued function $P(a,b,x,y) \equiv P(a,b \vert x,y) : \{0,1\}^4 \rightarrow [0, 1]$, such that $\sum_{a,b \in \{0,1\}} P(a,b \vert x,y) = 1$ for all $x,y \in \{0, 1\}$. By the notation $P(a,b \vert x,y)$ we want to indicate that the outputs $a,b$ depend on the inputs $x,y$. One can view $P(a,b \vert x,y)$ as a conditional probability distribution or a stochastic matrix which assigns to each input/output pair a certain probability.

The behavior on Alice's side alone is described by $P_A(a,x) \equiv P_A(a \vert x): \{0,1\}^2 \rightarrow [0,1]$ such that $\sum_{a \in \{0,1\}} P_A(a \vert x) = 1$ for all $x \in \{0,1\}$, and similarly for Bob. If $P(a,b \vert x,y)$ is non-signaling the local behavior of Alice can be computed from the joint behavior by $P_A(a \vert x) = P_A(a \vert x,y) = \sum_{b} P(a,b \vert x,y)$.  We call Alice's output bit \emph{unbiased} if $P_A(a \vert x) = 1/2$, for $a, x \in \{0, 1\}$. We classify systems by the correlation they introduce | or, equivalently, by the resource that is required 
to explain the joint behavior of its parts.

\begin{definition}
\label{localDef}
{\rm 
Consider a two-partite system which is characterized by $P(a,b \vert x,y)$.  
\begin{itemize}
\item The system $P(a,b \vert x,y)$ is {\em independent} if there exist $P_A(a \vert x)$ and $P_B(b \vert y)$ such that
\[
P(a,b \vert x,y) = P_A(a \vert x) \cdot P_B(b \vert y) ~~ \forall a,b,x,y \in \{0,1\} \ .
\]

\noindent
\item The system $P(a,b \vert x,y)$ is {\em local\/} if there exist $\{ P^i_A(a \vert x) \}_{i=1}^n$ and $\{ P^i_B(b \vert y) \}_{i=1}^n$ such that
\[
P(a,b \vert x,y) = \sum_{i=1}^n{w_i \cdot P^i_A(a \vert x) \cdot  P^i_B(b \vert y)} ~~ \forall a,b,x,y \in \{0,1\}
\]
for some weights $w_i\geq 0$, $i = 1,...,n$. A system $P(a,b \vert x,y)$ is {\em non-local\/} if it is not local.

\noindent
\item The system $P(a,b \vert x,y)$ is {\em non-signaling\/} if 
\begin{eqnarray}
\sum_{b \in \{0,1\}} P(a,b \vert x,y) &=& \sum_{b \in \{0,1\}} P(a,b \vert x, y') ~~ \forall a,x,y,y' \in \{0,1\}~, \nonumber \\
\sum_{a \in \{0,1\}} P(a,b \vert x,y) &=& \sum_{a \in \{0,1\}} P(a,b \vert x', y) ~~ \forall b,x,x',y \in \{0,1\}~. \nonumber
\end{eqnarray}
Consequently, a system $P(a,b \vert x,y)$ is {\em signaling\/} if it is not non-signaling.

\end{itemize}
}
\end{definition}

In terms of {\em classical\/} resources, these categories correspond to {\em no resources\/} 
at all, {\em shared information\/} and {\em message transmission\/} (i.e., signaling), respectively. In this paper, we
focus on systems in-between, i.e., which are non-local but at the same time non-signaling.

\subsection{CHSH-Type Non-Locality}
\label{sectionCHSH}
The only non-locality a system can have if the inputs and outputs are binary, is so-called
{\em CHSH-type non-locality}. 
A system is CHSH-non-local exactly if it allows for simulating the above-described 
PR-box with a probability superior to $75\%$ \cite{bellInequality}. On the other hand, appropriate measurements on quantum states achieve $(2+\sqrt{2})/4\approx 85\%$, 
but no more than that~\cite{TsirelsonBound}. It is still an open question why quantum physics is non-local, but not maximally so.

In order to precisely quantify the CHSH-type non-locality of $P(a,b \vert x,y)$, we first define four correlation functions:
\begin{equation}
\label{corrFunc}
  E_{x,y} = P(0,0 \vert x,y) + P(1,1 \vert x,y) - P(0,1 \vert x,y) - P(1,0 \vert x,y) ~,~  x,y \in \{0,1\} ~.
\end{equation}
Without loss of generality, one can assume that $E_{0,0}$, $E_{0,1}$, $E_{1,0} \geq 0$, because there always exist local reversible transformations on $P(a,b \vert x,y)$ which realize these conditions. The CHSH-type non-locality of $P(a,b \vert x,y)$ can then be written as
\begin{equation}
\label{CHSHeqn}
NL[P(a,b \vert x,y)] = E_{0,0} + E_{0,1} + E_{1,0} - E_{1,1} ~.
\end{equation}
Algebraically speaking, the maximal  value equation (\ref{CHSHeqn}) can attain is $4$, corresponding to Alice and Bob computing the relation $x \wedge y = a\oplus b$, i.e., simulating  perfectly a PR-box. It is not difficult to see that the behavior of $P(a,b \vert x,y)$ with $NL[P(a,b \vert x,y)] \in [2, 2\sqrt{2}]$ corresponds to a $q$-approximation of a PR-box with $q = \frac{1}{2} + \frac{1}{8} \cdot NL[P(a,b \vert x,y)] \in [3/4, (2 + \sqrt{2})/4]$. Therefore, as a $3/4$-PR-box can be simulated by shared randomness only, the joint input/output behavior of Alice and Bob is local if $NL[P(a,b \vert x,y)] \leq 2$. For the rest of this paper we will use the shorter notation $q$-PR-box for a $q$-approximation of a PR-box. Furthermore, we will use the notation $NL[\rho_{AB}]$ to denote also the maximal attainable non-locality of a bipartite-quantum state $\rho_{AB}$ by appropriate measurements.

A $q$-PR-box can be simulated by {local} quantum processes and shared entanglement between Alice and Bob only if  $q \leq \frac{2 + \sqrt{2}}{4}$. This is, 
however, not a sufficient condition:  There exist $q$-PR-boxes with $q \leq \frac{2 + \sqrt{2}}{4}$ which cannot be simulated quantum-physically~\cite{NLnotQuantum}.

In this paper, we will concentrate on PR-boxes with a special error behavior, so called \emph{isotropic} PR-boxes. We call a $q$-PR-box isotropic if Alice and Bob compute $x \wedge y = a \oplus b$ with probability $q$ and $x \wedge y = \overline{a \oplus b}$ with probability $1-q$. Furthermore, the probability that Alice and Bob output $a=0,b=0$ or $a=1,b=1$ is equal for all input pairs, and similarly for the outputs $a=0,b=1$ or $a=1,b=0$.
More formally, we call a $q$-PR-box {\em isotropic\/} if it satisfies  the following constraints on its  correlation functions: $E_{0,0} = E_{0,1} = E_{1,0} = - E_{1,1} \geq 0$.
Additionally, we need $\sum_{b \in \{0,1\}} P(a,b \vert x,y) = P_A(a \vert x) = 1/2$ for all $a,x,y \in \{0,1\}$ and $\sum_{a \in \{0,1\}} P(a,b \vert x,y) = P_B(b \vert y) = 1/2$ for all $b,x,y \in \{0,1\}$, i.e., the outputs are unbiased.
An isotropic $q$-PR-box is quantum-physically realizable if and only if $q\leq (2+\sqrt{2})/4$ (see Lemma \ref{wernerStateSimulation}).

\subsection{Non-Locality-Distillation Protocols}
\label{NDP}
Roughly speaking, a non-locality-distillation protocol is a | classical | bipartite protocol consisting of 
$q$-PR-boxes plus local operations, and having as a whole the behavior of a $p$-PR-box for some $p>q$.

An NDP $\mathcal{N}^{p}_{q}$ takes as input the bits $x$ and $y$ on Alice's and Bob's side, respectively. The outputs of $\mathcal{N}^{p}_{q}$ are the bits $a$ and $b$, where $\mathcal{N}^{p}_{q}$ must have the same input/output behavior as a $p$-PR-box. During the protocol execution, Alice and Bob are allowed to use arbitrary local operations, shared randomness, and $q$-PR-boxes, but are not allowed to communicate. Because a box is non-signaling, it can be modeled as two local \emph{terminals} on Alice's and Bob's side, respectively. A terminal is locally described by its input-output behavior. By using this language of terminals, a protocol execution consists of Alice and Bob applying local operations and local terminals. Moreover, one can easily handle protocols where inputs to boxes are delayed or boxes are intertwined. By the latter, we mean the situation that on Alice's side, 
the input to some Box~2 depends on the output of a Box~1, whereas on Bob's side, 
the dependence is opposite, i.e., the input to Box~1 depends on Box~2. Note that causality is not violated since all boxes are non-signaling; boxes can answer  
outputs to inputs locally, even if no input has been given on the other end yet.

\section{Known Limits to  Non-Locality Distillation}
\label{trivialBounds}

Lemma~\ref{corollarryClassicalBound} states that it is not possible to create non-locality from locality.
\begin{lemma}{\bf (Bell \cite{bellInequality})}
\label{corollarryClassicalBound}
There exists no non-locality distillation protocol $\mathcal{N}^{p}_{q}$ with $q \leq 3/4$ and $p > 3/4$.
\end{lemma}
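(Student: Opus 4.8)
The plan is to argue by contradiction: suppose a non-locality distillation protocol $\mathcal{N}^{p}_{q}$ exists with $q\le 3/4$ and $p>3/4$. The only thing one really needs is that a no-communication protocol assembled from local resources can itself be nothing but a local system; combined with the correspondence $q=\tfrac12+\tfrac18\,NL[\,\cdot\,]$ recorded in Section~\ref{sectionCHSH}, this forces $p\le 3/4$, contradicting the assumption. The first step is to observe that each $q$-PR-box available to $\mathcal{N}^{p}_{q}$ is \emph{local}: since $q\le 3/4$ its non-locality is $NL[\,\cdot\,]\le 2$, and --- as noted in Section~\ref{sectionCHSH} --- such a box is no stronger than a $3/4$-PR-box and therefore can be written as $\sum_i w_i\,P^i_A(a\mid x)\,P^i_B(b\mid y)$, i.e.\ produced from shared randomness alone. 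I would fix one such local-hidden-variable model for each of the finitely many, say $m$, boxes used, denoting the shared variable of the $k$-th box by $\lambda_k$, and let $r$ be the protocol's overall shared randomness.

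The technical heart is to show that, after conditioning on all of $r,\lambda_1,\dots,\lambda_m$, the whole of Alice's side of the execution becomes a (possibly privately randomized) channel depending on $x$ only, and symmetrically on Bob's side. I would prove this by induction on the sequence of local operations and box-terminal queries carried out on Alice's side: in the chosen model, the $k$-th terminal answers a query $x_k$ by the distribution $A_k(\,\cdot\mid x_k,\lambda_k)$, which depends only on the local query and on the \emph{pre-shared} variable $\lambda_k$; since the $\lambda_k$ are distributed in advance this is well defined even for intertwined boxes --- there is no circular dependence, just as causality was not violated in the original protocol thanks to non-signaling. Hence every query Alice issues and every bit she eventually outputs is a function of $x$, $r$, the $\lambda_k$ and her private coins, and marginalizing over the private coins gives a channel $\tilde A(a\mid x,r,\lambda_1,\dots,\lambda_m)$, and likewise $\tilde B(b\mid y,r,\lambda_1,\dots,\lambda_m)$. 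Therefore
\[
P_{\mathcal{N}}(a,b\mid x,y)=\sum_{r,\lambda_1,\dots,\lambda_m}\Pr[r,\lambda_1,\dots,\lambda_m]\,\tilde A(a\mid x,r,\lambda_1,\dots,\lambda_m)\,\tilde B(b\mid y,r,\lambda_1,\dots,\lambda_m)\,,
\]
which exhibits the overall behavior of $\mathcal{N}^{p}_{q}$ as a local system.

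It then suffices to invoke the easy direction of Bell's inequality: every local $P$ satisfies $NL[P]\le 2$. Indeed, for a product strategy the correlators factor as $E_{x,y}=\alpha_x\beta_y$ with $\alpha_x,\beta_y\in[-1,1]$, so $E_{0,0}+E_{0,1}+E_{1,0}-E_{1,1}=\alpha_0(\beta_0+\beta_1)+\alpha_1(\beta_0-\beta_1)\le|\beta_0+\beta_1|+|\beta_0-\beta_1|\le 2$, and convexity extends the bound to all local systems. Applying this to $P_{\mathcal{N}}$ and translating back through $p=\tfrac12+\tfrac18\,NL[P_{\mathcal{N}}]$ yields $p\le 3/4$, contradicting $p>3/4$ and finishing the argument. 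I expect the middle step to be the only genuine obstacle: one has to check carefully that composing the local-hidden-variable models of the individual boxes with adaptive and possibly intertwined local operations really does collapse to a single bipartite local model, so most of the effort in a full write-up would go into making that induction precise.
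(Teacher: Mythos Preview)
Your argument is correct. The paper does not actually supply a proof of this lemma; it attributes the result to Bell and merely remarks that ``it is not possible to create non-locality from locality.'' Your write-up is precisely the standard justification of that sentence: $q\le 3/4$ boxes are local, local resources compose under local operations and shared randomness to give a local system, and Bell's inequality then forces $NL[P_{\mathcal N}]\le 2$, i.e.\ $p\le 3/4$. There is nothing further to compare against; the paper treats the lemma as known.
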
 

\noindent
Lemma~\ref{corollarryQuantumBound}, on the other hand, gives a bound on what is quantum-physically achievable (the proof can be found in the appendix).
\begin{lemma}{\bf (Tsirelson \cite{TsirelsonBound})}
\label{corollarryQuantumBound}
There exists no non-locality distillation protocol $\mathcal{N}^{p}_{q}$, where the $q$-PR-boxes must be quantum-physically realizable, with $q \leq \frac{2 + \sqrt{2}}{4}$ and $p > \frac{2 + \sqrt{2}}{4}$. 
\end{lemma}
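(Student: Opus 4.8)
The plan is to reduce the statement to the original (single-shot) Tsirelson bound by observing that an NDP restricted to quantum-realizable $q$-PR-boxes is itself nothing but one round of local measurements on a bipartite quantum state. First I would fix a putative such protocol $\mathcal{N}^{p}_{q}$ using $n$ copies of a quantum-realizable $q$-PR-box with $q\le\frac{2+\sqrt2}{4}$ and $p>\frac{2+\sqrt2}{4}$. By assumption the $i$-th box has a quantum model: a bipartite state $\rho_i$ together with, for each input bit, a local POVM on Alice's and on Bob's marginal whose outcomes reproduce the box's statistics. I would replace each physical box by its model, so that Alice holds registers $A_1,\dots,A_n$, Bob holds $B_1,\dots,B_n$, the joint state is $\sigma:=\rho_1\otimes\cdots\otimes\rho_n$, and ``inputting the bit $c$ to the $A$-terminal of box $i$ and reading its output'' means ``measuring $A_i$ with the POVM associated to input $c$.'' Shared randomness is kept as a classical register $\lambda$ available to both parties.

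Next I would collapse the adaptive, possibly intertwined protocol to a single non-adaptive measurement on each side. For a fixed value of $\lambda$ and a fixed input $x$, Alice's execution is a sequence of local operations on $A_1,\dots,A_n$ and an ancilla in which the POVM applied to $A_i$ may depend on outcomes she has already obtained --- but only on her own side. By the deferred-measurement principle each such intermediate measurement can be replaced by a coherent copy into the ancilla followed by controlled operations, so Alice's whole procedure becomes one POVM $\{M^{x,\lambda}_a\}$ on $A_1\cdots A_n$, and symmetrically Bob's becomes $\{N^{y,\lambda}_b\}$ on $B_1\cdots B_n$. The intertwining of boxes --- Alice's input to one box depending on the output of another while on Bob's side the dependence runs the opposite way --- causes no difficulty here: all of Alice's operators act only on her tensor factors and all of Bob's only on his, hence they commute, no global time-ordering is needed, and the only trace left by the adaptive wiring is which POVM element of the collapsed measurements carries the label $(a,b)$. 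Therefore the box produced by $\mathcal{N}^{p}_{q}$ equals
\[
P(a,b\mid x,y)=\sum_{\lambda}p(\lambda)\,\Tr\!\big[(M^{x,\lambda}_a\otimes N^{y,\lambda}_b)\,\sigma\big],
\]
i.e.\ it arises from local measurements on the single bipartite quantum state $\sigma$ together with shared randomness.

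Finally I would invoke Tsirelson. For each fixed $\lambda$ the box $P_\lambda(a,b\mid x,y)=\Tr[(M^{x,\lambda}_a\otimes N^{y,\lambda}_b)\sigma]$ comes from local measurements on a bipartite quantum state, so $NL[P_\lambda]\le NL[\sigma]\le 2\sqrt2$ by the bound of~\cite{TsirelsonBound}. Since $NL[\cdot]$ is the maximum, over the finitely many local relabellings, of the absolute value of a fixed linear functional of the box, it is convex, whence $NL[P]\le\max_\lambda NL[P_\lambda]\le 2\sqrt2$; together with $q=\tfrac12+\tfrac18\,NL$ this gives $p\le\frac{2+\sqrt2}{4}$, contradicting $p>\frac{2+\sqrt2}{4}$. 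The arithmetic at the end is routine; the step that requires genuine care --- and where I expect the real work to lie --- is the previous one, namely making rigorous that deferred measurements together with the intertwined box structure collapse the NDP to a single round of local measurements on the fixed state $\sigma$. This is exactly the ``measurements can be postponed to the end'' ingredient of the Main Result, here in its simplest incarnation because every box already possesses an honest quantum model.
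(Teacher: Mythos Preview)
Your proof is correct and follows essentially the same route as the paper's: replace the quantum-realizable boxes by their quantum models, collapse the adaptive protocol into a single pair of binary-outcome observables on each side via deferred measurements, and then invoke Tsirelson's bound to derive a contradiction. The only cosmetic difference is that you handle shared randomness by a convexity argument on $NL[\cdot]$, whereas the paper absorbs it into ancilla qubits of the joint state; both variants work and your treatment is in fact more explicit than the paper's terse appendix proof.
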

Note that the argument in the proof of Lemma \ref{corollarryQuantumBound} holds only for quantum-physically realizable approximations of PR-boxes. At first sight, one might think that \emph{every}
$q$-PR-box with $q \in [\frac{3}{4}, \frac{2 + \sqrt{2}}{4}]$ can be simulated quantum mechanically. But this is \emph{not} the case \cite{NLnotQuantum}. Actually, there exist $(3/4 + \varepsilon)$-PR-boxes, for any $0 < \varepsilon < 0.125$, which can be distilled up to $p = 0.875 > \frac{2 + \sqrt{2}}{4}$. Of course, these boxes cannot be simulated by quantum mechanics, as otherwise Tsirelson's bound would be crossed. Furthermore, there even exist non-isotropic PR-boxes which can be simulated by quantum mechanics \emph{and} can be distilled \cite{FosterBox}.

Finally, by an easy argument one can show that perfect boxes cannot be obtained from imperfect ones.
\begin{lemma}
Let $\varepsilon>0$.
There exists no non-locality distillation protocol $\mathcal{N}^{p}_{q}$ with $q \leq 1 - \varepsilon$ and $p = 1$.
\end{lemma}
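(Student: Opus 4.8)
The plan is to argue by contradiction: suppose some NDP $\mathcal{N}^1_q$ uses a finite number $n$ of $q$-PR-boxes with $q\le 1-\varepsilon$, together with shared randomness and local operations, and reproduces exactly the (perfect) PR-box. The perfect PR-box is an extreme point of the non-signalling polytope --- each of its correlators equals $\pm 1$, so, e.g., on input $(0,0)$ one needs $a=b$ with certainty and on $(1,1)$ one needs $a\ne b$ with certainty. A probability-$1$ event survives conditioning, so each of these rigid constraints must already hold for (almost) every fixed value of the shared and local randomness. Hence I would first assume the protocol is deterministic apart from the internal randomness of the $n$ boxes: after fixing that randomness away, Alice outputs $a$ as a fixed function of $x$ and of her box-outputs, and symmetrically for Bob.

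Next I would exhibit the wiring as a multilinear map. Writing $P^{\mathrm{out}}(a,b\mid x,y)$ as a finite sum over all internal wire values of a product containing exactly one factor $B_i(\,\cdot\mid\cdot\,)$ per box times $\{0,1\}$-indicators coming from the deterministic local nodes, one sees that $P^{\mathrm{out}}$ is affine in each $B_i$ when the remaining boxes are held fixed; here one uses non-signalling of the boxes to guarantee that even ``intertwined'' wirings answer locally and that the sum is legitimate. Consequently the CHSH value $\mathrm{NL}[P^{\mathrm{out}}]$, being linear in $P^{\mathrm{out}}$, is likewise affine in each $B_i$. By construction this function takes the value $4$ at the actual boxes, while it is bounded by $4$ on the whole non-signalling polytope, the maximum $4$ being attained uniquely by the PR-box (given the paper's normalisation $E_{0,0},E_{0,1},E_{1,0}\ge 0$).

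For the endgame, observe that an isotropic $q$-PR-box with $q<1$ is $(2q-1)$ times the PR-box plus $(2-2q)$ times the fully mixed box, hence a non-extreme point on a segment emanating from the (interior) fully mixed box --- in particular it lies in the \emph{interior} of the non-signalling polytope. An affine function that is $\le 4$ on the polytope and equals $4$ at an interior point is identically $4$. Applying this to $B_1$ (the other boxes fixed at their actual values) forces $\mathrm{NL}[P^{\mathrm{out}}]=4$ for \emph{every} choice of $B_1$; so we may replace $B_1$ by an arbitrary local deterministic box. The same reasoning now applies to $B_2$ --- since we have just shown $\mathrm{NL}=4$ for all values of $B_1$, the function of $B_2$ still equals $4$ at $B_2^{\mathrm{act}}$ --- and so on; after $n$ steps we conclude that $\mathrm{NL}=4$ even when \emph{all} $n$ boxes are local deterministic boxes. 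But a wiring of local deterministic boxes is again a local box (replace each by its shared-randomness simulation), so its CHSH value is at most $2$ by Bell's theorem (cf.\ Lemma~\ref{corollarryClassicalBound}) --- the desired contradiction.

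I expect two points to require the most care. The first is the rigour of the multilinearity step in the presence of the intertwined boxes the NDP model permits: one must verify that these apparently circular wirings genuinely collapse to an honest finite multilinear expression in the box statistics, using only non-signalling. The second is the restriction to the interior of the non-signalling polytope: a general (non-isotropic) $q$-PR-box may sit on the boundary, and then the argument must be run on the minimal face $F$ containing the box in its relative interior, with the degenerate possibility that $F$ has no local deterministic vertex ruled out separately. For the isotropic $q$-PR-boxes this paper is ultimately concerned with, this second complication does not arise.
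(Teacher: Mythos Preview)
The paper gives no proof of this lemma beyond the remark ``by an easy argument one can show that perfect boxes cannot be obtained from imperfect ones,'' so there is nothing to compare against; your argument has to be assessed on its own.

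Your approach --- condition away the shared randomness (a probability-$1$ event survives conditioning), use multilinearity of non-signalling wirings, and exploit that the PR-box is the unique maximiser of the fixed CHSH functional on the non-signalling polytope --- is sound and is one of the standard routes to this statement. The iteration is correct: once the affine function of $B_1$ equals its maximum $4$ at the interior point $B_1^{\mathrm{act}}$ it is constant~$4$, in particular still $4$ after replacing $B_1$ by any local box; the same reasoning then applies to $B_2$, and so on. An equivalent and slightly slicker packaging of the same idea is to write the isotropic $q$-box as $(2q-1)\,\mathrm{PR}+(2-2q)\,I$ with $I$ the fully mixed box, expand all $n$ factors by multilinearity, and use extremality of the PR-box in the non-signalling polytope to force every one of the $2^n$ terms --- in particular the all-$I$ term, which is local --- to equal the PR-box.

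Your two flagged concerns are the right ones. For the first, multilinearity does hold for the intertwined wirings of Section~\ref{NDP}: for any fixed assignment of all internal inputs and outputs $(u_i,v_i,s_i,t_i)$, the wiring determines the $u_i,v_i$ from the $s_j,t_j$ and $(x,y)$, the probability of that assignment is $\prod_i B_i(s_i,t_i\mid u_i,v_i)$, and the sum over assignments is a genuine probability distribution precisely because the non-signalling marginals telescope; the expression is then manifestly degree one in each $B_i$. For the second, note that the lemma as stated is \emph{not} restricted to isotropic boxes, so your argument as written is complete only under that extra hypothesis (isotropic $q$-boxes with $q<1$ have all sixteen entries strictly positive, hence lie in the interior). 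The minimal-face extension you sketch does go through in the $2$-input $2$-output setting --- every face of the non-signalling polytope other than a single PR-type vertex contains a local deterministic vertex --- but this requires a short separate argument about the facial structure that you have not supplied. Since the paper offers no proof and its main results concern isotropic boxes anyway, this residual gap is minor.
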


\section{A New Result on Non-Locality Distillation}
\label{sectionImpossibilityDistillation}

In this section we will prove our main result,  Theorem~\ref{resultGeneralDistillation},   stating that it is impossible to distill more than a certain amount of non-locality (which will depend on $q$) from isotropic $q$-PR-boxes with $\frac{3}{4} \leq q \leq \frac{2 + \sqrt{2}}{4}$.
Similarly to the proof of Lemma \ref{corollarryQuantumBound}, we will use quantum mechanics to prove this \emph{classical} statement. As a preparation, 
we  establish some results about the correspondence between \emph{CHSH-type non-locality} and the \emph{fully entangled fraction}.

\subsection{CHSH Violation, Fully Entangled Fraction, and Weakly Entangled Mixed States}

The CHSH-type non-locality will be our measure of non-locality, whereas fidelity is a measure of entanglement,
more precisely, of proximity   between quantum states, defined by 
\begin{equation}
\label{fidelityDef}
F(\rho,\sigma) = {\rm Tr}^{2}\left(\sqrt{\sigma^{1/2} \rho \sigma^{1/2}}\right)\ ,
\end{equation}
where $\rho$ and $\sigma$ are mixed  states. Equation (\ref{fidelityDef}) simplifies to $F(\rho, \left|\psi\right\rangle \left\langle \psi \right|) = \left\langle \psi\right| \rho \left| \psi \right\rangle$,
if $\sigma = \left| \psi \right\rangle \left\langle \psi \right|$ is a pure state. In particular, we are interested in the quantity
\begin{equation}
\label{fidelityPure}
F(\rho) = \max_{U_{A},U_{B}} \left\langle \Phi^{+} \right| (U_{A} \otimes U_{B}) \rho (U_{A}  \otimes U_{B})^{\dagger} \left|\Phi^{+} \right\rangle\ ,
\end{equation}
which is called the \emph{fully entangled fraction} \cite{fidelityCite} of the two-qubit state $\rho$ and measures the closeness of $\rho$ to a maximally entangled state.\newline

Theorem~\ref{nonDistabilittyTheorem} below states that the fully entangled fraction of the following class of states cannot 
be increased non-interactively. We call them  \emph{weakly entangled mixed states}, denoted $\Omega_{\alpha}$.
\begin{definition}
\label{tauDef}
{\rm 
The \emph{weakly entangled mixed state} $\Omega_{\alpha}$ with $\alpha \in [0,1]$ is defined as 
\begin{eqnarray}
%\label{omegaEqu}
  \Omega_{\alpha} &=& \alpha \left| \Phi^{+} \right\rangle \left\langle \Phi^{+} \right| + \frac{1-\alpha}{2} \left|00\right\rangle\left\langle 00\right| + \frac{1-\alpha}{2} \left|11\right\rangle \left\langle 11\right| \nonumber \\
     \label{omegaEqu1}        &=& \frac{1+\alpha}{2}  \left| \Phi^{+} \right\rangle \left\langle \Phi^{+} \right| + \frac{1-\alpha}{2}  \left| \Phi^{-} \right\rangle \left\langle \Phi^{-} \right| ~.\nonumber
\end{eqnarray}
}
\end{definition}

We need  some properties of weakly entangled mixed states. The proofs of Lemmas~\ref{wernerStateSimulation}  
and~\ref{weakEntangledMaximallCHSH} can be found in the appendix. 
First, we show which isotropic $q$-PR-boxes can be simulated using $\Omega_{\alpha}$.
\begin{lemma}
\label{wernerStateSimulation}
An isotropic $q$-PR-box can be simulated by  $\Omega_{\alpha}$, using appropriate measurements, if
\[
%\label{weakIsotropicNL}
  q \leq \frac{1}{2} + \frac{1}{4} \sqrt{1 + \alpha^{2}} ~.
\]
\end{lemma}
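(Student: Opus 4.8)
The plan is to prove Lemma~\ref{wernerStateSimulation} by exhibiting an explicit measurement strategy for Alice and Bob on the state $\Omega_{\alpha}$ and computing the resulting correlation functions $E_{x,y}$. Since the claim only asserts simulability \emph{if} $q$ lies below the stated threshold, it suffices to achieve the extremal value $q = \tfrac12 + \tfrac14\sqrt{1+\alpha^2}$; every smaller $q$ is then obtained by mixing in local (shared-randomness) noise, which interpolates isotropically down to $q=3/4$.

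First I would fix the standard CHSH measurement setup: Alice measures in the bases given by observables $A_0, A_1$ and Bob in bases $B_0, B_1$, each a projective qubit measurement specified by an angle on the Bloch sphere, with outcomes labelled by $\pm 1$ (which I then translate to bits $a,b\in\{0,1\}$). The correlation $E_{x,y}$ in the sense of equation~(\ref{corrFunc}) equals $\Tr(\Omega_\alpha\, (A_x\otimes B_y))$. The key computation is to evaluate these traces on $\Omega_\alpha = \alpha\,|\Phi^+\rangle\langle\Phi^+| + \tfrac{1-\alpha}{2}(|00\rangle\langle00| + |11\rangle\langle11|)$. The maximally entangled part contributes the usual $-\cos(\theta_A - \theta_B)$ type term (up to sign conventions), while the diagonal classical part $\tfrac{1-\alpha}{2}(|00\rangle\langle00|+|11\rangle\langle11|)$ contributes only through the $Z\otimes Z$ components of $A_x\otimes B_y$, i.e.\ a term proportional to $\cos\theta_A\cos\theta_B$. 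Choosing the measurement angles symmetrically (the standard CHSH choice, e.g.\ Alice at $0$ and $\pi/2$, Bob at $\pm\pi/4$), one finds that each $E_{x,y}$ has magnitude of the form $\tfrac{1}{\sqrt2}(\alpha + \text{something})$; tuning the angles to respect the isotropy constraint $E_{0,0}=E_{0,1}=E_{1,0}=-E_{1,1}\ge 0$ and optimizing yields the common value $\tfrac12\sqrt{1+\alpha^2}$ for $NL[\Omega_\alpha]/4$ after the normalization in (\ref{CHSHeqn}), which is exactly $q = \tfrac12 + \tfrac14\sqrt{1+\alpha^2}$ via the dictionary $q = \tfrac12 + \tfrac18 NL$. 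One also has to check the unbiasedness condition $P_A(a|x)=P_B(b|y)=1/2$: this holds because the reduced state of $\Omega_\alpha$ on either side is the maximally mixed qubit $I/2$, so any projective measurement gives unbiased outcomes, and a shared random bit flip can symmetrize the joint distribution into the required isotropic form without changing the correlations.

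The main obstacle — though it is more bookkeeping than conceptual — is getting the optimization and the sign conventions right so that the \emph{isotropic} constraint is met with equality while the CHSH value is maximal; a naive CHSH-optimal choice of angles for $\Omega_\alpha$ need not produce equal $E_{x,y}$'s, so one may need to solve a small two-parameter optimization (over Alice's and Bob's angle offsets) subject to the isotropy symmetry, or alternatively invoke a symmetrization argument over shared randomness to force isotropy and then show the symmetrized box still has the claimed $q$. I would present the explicit angle choice, state the resulting four correlators, verify they satisfy $E_{0,0}=E_{0,1}=E_{1,0}=-E_{1,1}=\tfrac{1}{2}\sqrt{1+\alpha^2}$ and the marginal conditions, and conclude via (\ref{CHSHeqn}) and the $q$--$NL$ relation. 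Finally, to cover all $q$ below the threshold, I note that applying the identity box with probability $\lambda$ and the optimal measured box with probability $1-\lambda$ yields an isotropic $q'$-PR-box for every $q'$ between $3/4$ and $\tfrac12+\tfrac14\sqrt{1+\alpha^2}$, completing the proof.
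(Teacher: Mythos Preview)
Your overall plan matches the paper's proof: exhibit explicit projective observables on $\Omega_\alpha$, compute the four correlators, use that $\Tr_B(\Omega_\alpha)=I/2$ to get unbiased marginals, and then mix with shared randomness to obtain any smaller $q$. Two points in your sketch need correction, however.

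First, the standard CHSH angles you name (Bob at $\pm\pi/4$) are not optimal for $\Omega_\alpha$ when $\alpha<1$: with Alice at $\sigma_Z,\sigma_X$ and Bob at $\pm\pi/4$ one gets CHSH value $\sqrt2(1+\alpha)$, strictly below $2\sqrt{1+\alpha^2}$. The optimal Bob observables are $\alpha$-dependent; the paper takes $M(\vec b_{0,1})=\tfrac{1}{\sqrt{1+\alpha^2}}\sigma_Z\pm\tfrac{\alpha}{\sqrt{1+\alpha^2}}\sigma_X$, which yields $E_{0,0}=E_{0,1}=1/\sqrt{1+\alpha^2}$ and $E_{1,0}=-E_{1,1}=\alpha^2/\sqrt{1+\alpha^2}$.

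Second, your primary route---``tuning the angles to respect the isotropy constraint and optimizing''---cannot reach the maximal value directly. The correlation matrix of $\Omega_\alpha$ has singular values $1,\alpha,\alpha$, and any CHSH-optimal measurement must place one of Bob's sum/difference directions along the non-degenerate axis; working through the constraint $E_{0,0}=E_{0,1}$ then forces exactly the paper's choice, whose correlators are unequal for $\alpha<1$. So isotropy at the maximal CHSH value is \emph{not} attainable by angle choice alone. Your fallback---symmetrization via shared randomness---is the correct path, and it is precisely what the paper does: with probability $1/2$ Alice flips her input bit and Bob flips his output when his input is $1$, which averages the four correlators to the common value $\tfrac12\sqrt{1+\alpha^2}$ while preserving the CHSH sum. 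With that adjustment your proof goes through and coincides with the paper's.
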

Note that the simulation of boxes by measurements on entangled states fits nicely into the formalism of boxes described by terminals (see Section \ref{NDP}): A terminal corresponds to a local measurement performed on an entangled state.

\begin{lemma}
\label{weakEntangledMaximallCHSH}
Let $\Omega_{\alpha}$ be a weakly entangled mixed state and $\sigma^{ij}$ be arbitrary two-qubit quantum states and $\sum_{ij} p_{ij} = 1$ with $p_{ij} \geq 0$, then
\[
 F(\Omega_\alpha) \geq \sum_{ij} p_{ij} \cdot F(\sigma^{ij}) \Rightarrow NL[\Omega_{\alpha}] + c(\alpha) \geq \sum_{ij} p_{ij} \cdot NL[\sigma^{ij}] ~,
\]
where $c(\alpha) = \alpha (2 \sqrt{2} - 2) - 2 \sqrt{1 + \alpha^2} + 2$.
\end{lemma}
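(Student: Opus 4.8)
The plan is to prove the implication
\[
 F(\Omega_\alpha) \geq \sum_{ij} p_{ij}\, F(\sigma^{ij}) \;\Longrightarrow\; NL[\Omega_{\alpha}] + c(\alpha) \geq \sum_{ij} p_{ij}\, NL[\sigma^{ij}]
\]
by sandwiching both sides between expressions in the fully entangled fraction. The key is that for a single two-qubit state $\sigma$ there are tight relations, in both directions, between $NL[\sigma]$ and $F(\sigma)$: on the one hand, any $F$ can be achieved (by a Bell-diagonal-type state) with a correspondingly large $NL$, and on the other hand a large $F$ forces an upper bound on $NL$. Concretely, I would first establish two one-variable bounds: (i) $NL[\sigma] \le 2\sqrt{2}\,(2F(\sigma)-1)$ for $F(\sigma) \ge 1/2$ — this is the Horodecki-type CHSH bound, saying proximity to a maximally entangled state limits the achievable CHSH value, since $F$ controls the largest eigenvalue of the correlation matrix; and (ii) an explicit evaluation $NL[\Omega_\alpha] = \cdots$ together with $F(\Omega_\alpha) = (1+\alpha)/2$, both of which follow from the Bell-diagonal form in Definition~\ref{tauDef} and the Horodecki criterion (the correlation matrix of $\Omega_\alpha$ has singular values $\alpha, \alpha, 1$, so $NL[\Omega_\alpha] = 2\sqrt{1+\alpha^2}$).

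Granting these, the argument is a short chain. Apply (i) to each $\sigma^{ij}$, average with weights $p_{ij}$, and use the hypothesis $\sum_{ij} p_{ij} F(\sigma^{ij}) \le F(\Omega_\alpha) = (1+\alpha)/2$:
\[
 \sum_{ij} p_{ij}\, NL[\sigma^{ij}] \;\le\; 2\sqrt{2}\Bigl(2\sum_{ij} p_{ij}\, F(\sigma^{ij}) - 1\Bigr) \;\le\; 2\sqrt{2}\,(2F(\Omega_\alpha) - 1) \;=\; 2\sqrt{2}\,\alpha .
\]
Then it suffices to check the inequality $2\sqrt{2}\,\alpha \le NL[\Omega_\alpha] + c(\alpha) = 2\sqrt{1+\alpha^2} + \alpha(2\sqrt{2}-2) - 2\sqrt{1+\alpha^2} + 2 = 2\sqrt{2}\,\alpha + 2 - 2\alpha$, which holds because $2 - 2\alpha \ge 0$ for $\alpha \in [0,1]$. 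Thus $c(\alpha)$ is precisely the slack one loses in passing from the entanglement statement to the non-locality statement: it is manufactured so that $NL[\Omega_\alpha] + c(\alpha)$ equals the worst-case CHSH value $2\sqrt{2}\,\alpha$ compatible with the constraint $F \le F(\Omega_\alpha)$.

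The main obstacle is establishing bound (i) carefully for \emph{arbitrary} two-qubit states rather than just Bell-diagonal ones — i.e., showing that $NL[\sigma] = \max_{U_A,U_B}(\text{CHSH})$ is controlled by $F(\sigma) = \max_{U_A,U_B}\langle\Phi^+|(U_A\otimes U_B)\sigma(U_A\otimes U_B)^\dagger|\Phi^+\rangle$ with the stated constant. The cleanest route is: reduce $\sigma$ to its ``$T$-matrix'' normal form under local unitaries; recall the Horodecki formula $NL[\sigma] = 2\sqrt{t_1^2 + t_2^2}$ where $t_1 \ge t_2 \ge t_3$ are the singular values of the correlation matrix $T$; observe that $F(\sigma)$ is (after the optimizing local unitaries) at least $\tfrac14(1 + t_1 + t_2 - t_3)$ or a similar combination, so that $2F-1$ dominates $\tfrac12(t_1 + t_2)$ up to the geometry, and finally bound $\sqrt{t_1^2+t_2^2} \le \sqrt{2}\cdot\tfrac12(t_1+t_2) \cdot(\text{factor})$ — where the constant $\sqrt{2}$ comes from $\sqrt{t_1^2+t_2^2}\le \sqrt{2}\,\max(t_1,t_2)\le\sqrt2(t_1+t_2-t_3)/\text{(something)}$. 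The delicate point is keeping the relation between the \emph{same} pair of local unitaries optimizing $F$ versus $NL$, or else arguing that one may upper-bound $NL$ using whatever frame is optimal — which is legitimate since $NL$ is a max. I expect the routine but nontrivial part to be verifying that in the normal form the inequality $2\sqrt{t_1^2+t_2^2}\le 2\sqrt2\,(2F(\sigma)-1)$ is indeed implied by $F(\sigma)\ge\tfrac14(1+t_1+t_2+t_3)$ (taking signs so this is the relevant overlap); this amounts to the elementary estimate $\sqrt{t_1^2+t_2^2}\le \sqrt2\cdot\tfrac{t_1+t_2}{2}\le\sqrt2\cdot\tfrac{t_1+t_2+t_3}{2}$ when all $t_i\ge0$, with the general-sign case reduced to this by absorbing signs into local unitaries.
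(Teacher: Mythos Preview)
Your bound (i), $NL[\sigma] \le 2\sqrt{2}\,(2F(\sigma)-1)$ for $F(\sigma)\ge 1/2$, is false. Take $\sigma=\Omega_\alpha$ itself: $F(\Omega_\alpha)=(1+\alpha)/2$, so your bound would give $NL[\Omega_\alpha]\le 2\sqrt{2}\,\alpha$, but in fact $NL[\Omega_\alpha]=2\sqrt{1+\alpha^2}>2\sqrt{2}\,\alpha$ for every $\alpha<1$ (indeed $1+\alpha^2>2\alpha^2$). More drastically, at $\alpha=0$ your bound gives $0$ while $NL[\Omega_0]=2$. The ``elementary estimate'' you propose to justify (i), namely $\sqrt{t_1^2+t_2^2}\le\sqrt{2}\cdot\frac{t_1+t_2}{2}$, is the wrong way round: squaring it gives $(t_1-t_2)^2\le 0$. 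So both the claimed bound and the sketched derivation fail.

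The correct pointwise bound (Proposition~\ref{proMaxNonlocal} in the paper) is $NL[\sigma]\le 2\sqrt{1+(2F(\sigma)-1)^2}$ for $F(\sigma)\ge 1/2$, together with $NL[\sigma]\le 4F(\sigma)$ for $F(\sigma)\le 1/2$ (Proposition~\ref{propLocalUpper}). The first of these is \emph{convex} in $F$, not linear, so you cannot simply average and plug in $\sum p_{ij}F(\sigma^{ij})$. What the paper does instead is set $y_k=2F(\sigma^k)-1$ and solve the finite-dimensional optimization
\[
\max\ \sum_k p_k\cdot 2\sqrt{1+y_k^2}\quad\text{subject to}\quad \sum_k p_k y_k=\alpha,\ \sum_k p_k=1,\ y_k\in[0,1],
\]
whose optimum (by convexity) is attained at the two-point distribution $y=1$ with weight $\alpha$ and $y=0$ with weight $1-\alpha$, giving $\alpha\cdot 2\sqrt{2}+(1-\alpha)\cdot 2=2\sqrt{2}\,\alpha+2-2\alpha$. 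That number \emph{is} $NL[\Omega_\alpha]+c(\alpha)$, which is why $c(\alpha)$ takes the stated form. Your final arithmetic is right, but the chain leading to $\sum p_{ij}NL[\sigma^{ij}]\le 2\sqrt{2}\,\alpha$ proves something strictly stronger than the lemma and is not true; replace the linear bound by the convex one and run the extremal argument.
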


\subsection{Impossibility of Entanglement as well as Non-Locality Distillation}

In the following, we  consider {\em Entanglement Distillation Protocols\/} (EDP) \emph{without} communication \cite{entanglementDistillationUpperbound}. An EDP $\mathcal{E}$ takes as input 
 $\rho^{\otimes n}$, where $\rho$ is a (mixed) two-qubit state shared between Alice and Bob. The output of an EDP is one two-qubit (mixed) state $\sigma$. The operation of an EDP is denoted by $\mathcal{E}(\rho^{\otimes n}) = \sigma$. The performance of the EDP is measured by the fully entangled fraction-difference between the output state $\sigma$ and the input state $\rho$.
Without loss of generality, we can assume that the EDP consists of Alice and Bob applying local unitary operations $U_A$ and $U_B$ on their input quantum state $\rho_{AB}^{input}$ (ancilla qubits and $\rho^{\otimes n}$). In other words, we have $\mathcal{E}(\rho^{\otimes n}) = \Tr_{\bar{A}\bar{B}} \left( (U_A \otimes U_B) \rho_{AB}^{input} (U_A \otimes U_B)^\dagger \right)$ where $\Tr_{\bar{A}\bar{B}}$ means that Alice and Bob trace out everything but a single qubit each (this resulting two-qubit state is denoted by $\sigma$). Then, we can prove the following theorem which is mainly based on a result of Ambainis and Yang \cite{entanglementDistillationUpperbound}. It says that it is impossible to distill (without communication) out of many $\Omega_\alpha$ states a state which contains more entanglement than $\Omega_\alpha$.

\begin{theorem}
\label{nonDistabilittyTheorem}
For any probabilistic non-interactive entanglement-distillation protocol $\mathcal{E}$, the upper bound
\[
  F(\mathcal{E}(\Omega_{\alpha}^{\otimes n})) \leq \frac{1 + \alpha}{2} ~
\]
holds and is tight.
\end{theorem}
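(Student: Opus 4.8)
The plan is to establish the upper bound $F(\mathcal{E}(\Omega_\alpha^{\otimes n})) \leq (1+\alpha)/2$ by reducing it to the Ambainis–Yang bound on non-interactive entanglement distillation, and then to verify tightness by exhibiting the trivial protocol. Recall that $\Omega_\alpha = \frac{1+\alpha}{2}|\Phi^+\rangle\langle\Phi^+| + \frac{1-\alpha}{2}|\Phi^-\rangle\langle\Phi^-|$ is a rank-two Bell-diagonal state whose largest eigenvalue is exactly $\lambda_{\max} = (1+\alpha)/2$. The first step is to observe that $\Omega_\alpha$ is \emph{separable}-looking but actually lives in a family for which the fully entangled fraction of any single copy is already $F(\Omega_\alpha) = (1+\alpha)/2$: the maximizing local unitaries in \eqref{fidelityPure} are trivial since $\Omega_\alpha$ is already Bell-diagonal with $|\Phi^+\rangle$ as dominant eigenvector, and no $U_A \otimes U_B$ can move weight onto $|\Phi^+\rangle$ beyond the top eigenvalue. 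So the claim is really that tensoring and then applying local unitaries followed by partial trace cannot beat the single-copy value.

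The core of the argument is the Ambainis–Yang result \cite{entanglementDistillationUpperbound}, which I would invoke in the following form: for a Bell-diagonal (or more generally an appropriate class of) two-qubit state $\rho$ with dominant eigenvalue $\lambda$, any non-interactive LOCC protocol — here restricted even further to local unitaries plus partial trace, as the excerpt has already normalized the EDP to this form via $\mathcal{E}(\rho^{\otimes n}) = \Tr_{\bar A\bar B}((U_A\otimes U_B)\rho_{AB}^{input}(U_A\otimes U_B)^\dagger)$ — cannot produce an output two-qubit state $\sigma$ with $F(\sigma) > \lambda$. The key structural point is that $\Omega_\alpha^{\otimes n}$ remains diagonal in the tensor-product Bell basis, its eigenvalues are products $\prod_j \frac{1\pm\alpha}{2}$, and the largest among these is $((1+\alpha)/2)^n \le (1+\alpha)/2$; but $F$ of the output is controlled not by the top eigenvalue of the big state but by how much overlap with $|\Phi^+\rangle$ survives after tracing out all but one qubit on each side, and Ambainis–Yang show precisely that this quantity is bounded by the single-copy fully entangled fraction when no classical communication (hence no post-selection correlated across the cut) is allowed. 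I would spell out the reduction: append ancillas in a fixed product state (this does not change the relevant spectral data across the $A|B$ cut), apply $U_A \otimes U_B$, and note that the reduced state on one qubit pair is a convex combination — over the traced-out degrees of freedom — of states each obtained by a \emph{local} filtering of $\Omega_\alpha$, whence $F$ of the average is at most the max over such local operations of $F$, which is $(1+\alpha)/2$.

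For tightness: the identity protocol — $n=1$, $U_A = U_B = I$, no tracing needed — outputs $\Omega_\alpha$ itself, which has $F(\Omega_\alpha) = (1+\alpha)/2$, so the bound is attained. \textbf{The main obstacle} will be making the ``non-interactivity kills correlated post-selection'' step rigorous without simply quoting \cite{entanglementDistillationUpperbound} as a black box: one must check that the class of states $\Omega_\alpha$ genuinely falls under the hypotheses of the Ambainis–Yang theorem (their result is often phrased for isotropic/Werner states or states with a specific symmetry), and that the ``probabilistic'' qualifier in the theorem statement — allowing the protocol to succeed only with some probability and condition on success — does not let one exceed $(1+\alpha)/2$; this requires that the success event be defined by \emph{local} measurement outcomes on each side separately, so that conditioning still factorizes across the cut and the convexity bound on $F$ goes through. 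I expect the cleanest route is to first prove it for deterministic local-unitary-plus-trace protocols as above, then handle the probabilistic case by absorbing the (locally defined) success indicator into an enlarged local operation and re-applying the same convexity argument.
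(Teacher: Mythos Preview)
Your proposal has a genuine gap at the crucial step. You invoke Ambainis--Yang \cite{entanglementDistillationUpperbound} ``in the following form: for a Bell-diagonal two-qubit state $\rho$ with dominant eigenvalue $\lambda$, any non-interactive protocol cannot produce $F(\sigma) > \lambda$.'' But that statement is precisely the theorem you are trying to prove, specialized to $\Omega_\alpha$; Ambainis--Yang do \emph{not} prove it in that generality. What they actually establish is the bound $F^{+}(\mathcal{E}(\omega_r^n)) \le 1 - \tfrac{r}{2n}$ for a very specific input: $n$ copies of $|\Phi^+\rangle$ of which $r$ randomly chosen pairs have been measured in the computational basis (yielding $|00\rangle$ or $|11\rangle$). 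Your sketch never makes contact with that object. Your alternative justification --- that after $U_A \otimes U_B$ and partial trace the output is ``a convex combination of states each obtained by a local filtering of $\Omega_\alpha$'' (single copy) --- is unsubstantiated and in fact not correct as stated: a local unitary on $n$ copies can entangle the copies on each side, and tracing out does not reduce this to single-copy local operations.

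The missing idea is to use the \emph{other} decomposition of $\Omega_\alpha$ given in Definition~\ref{tauDef}, namely $\Omega_\alpha = \alpha\,|\Phi^+\rangle\langle\Phi^+| + \tfrac{1-\alpha}{2}\bigl(|00\rangle\langle 00| + |11\rangle\langle 11|\bigr)$, not the Bell-diagonal one you work with. Expanding the $n$-fold tensor product in this form exhibits $\Omega_\alpha^{\otimes n}$ as the mixture $\sum_{r=0}^n \binom{n}{r}\alpha^{n-r}(1-\alpha)^r\,\omega_r^n$, where $\omega_r^n$ is exactly the Ambainis--Yang state with $r$ measured pairs. Linearity of $\mathcal{E}$ and concavity of $F^+$ in the input then give $F(\mathcal{E}(\Omega_\alpha^{\otimes n})) \le \sum_r \binom{n}{r}\alpha^{n-r}(1-\alpha)^r\bigl(1 - \tfrac{r}{2n}\bigr) = \tfrac{1+\alpha}{2}$, which is a one-line binomial identity. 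Your tightness argument (trivial protocol on one copy) is fine and matches the paper's.
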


\begin{proof}
Assume  we have $n$ copies of the state $\left| \Phi^{+} \right\rangle$. Then, we perform a measurement in the computational basis on $r$ randomly chosen two-qubit states of $\left| \Phi^{+} \right\rangle^{\otimes n}$. Hence, for the measured states we get the result $\left|00\right\rangle$ and $\left|11\right\rangle$ with probability $1/2$ each, and the remaining $n - r$ two-qubit states stay unchanged.
We denote the resulting set of possible states by $\{ \left|\phi_{\vec{v}}^{r} \right\rangle \}_{\vec{v}} = \{ \bigotimes_{j=1}^{n} \left|\phi_{j}\right\rangle$ \}, where
\begin{displaymath}
\left|\phi_{j}\right\rangle = \left\{ \begin{array}{ll} \left|00\right\rangle & \textnormal{if } v[j] = 0  \\ \left|11\right\rangle & \textnormal{if } v[j] = 1 \\ \left|\Phi^{+}\right\rangle & \textnormal{if } v[j] = \ast  \end{array} \right.
\end{displaymath}
and $\vec{v} \in \{0,1,\ast \}^{n}$. The degree of $\vec{v}$, denoted by $\deg(\vec{v}) = r$, is the number of $j$'s in
$\{1,...,n\}$ for which $v[j] \neq \ast$. It is easy to see that there are $\genfrac(){0 pt}{}{n}{r} 2^{r}$ states in the set $\{ \left|\phi_{\vec{v}}^{r} \right\rangle \}_{\vec{v}}$ of degree $r$ which have all the same probability. We claim that the state $\Omega_{\alpha}^{\otimes n}$, written as 
\begin{equation}
\label{defNewOmega}
\Omega_{\alpha}^{\otimes n} = \left( \alpha \left| \Phi^{+} \right\rangle \left\langle \Phi^{+} \right| + \frac{1-\alpha}{2} \left( \left|00\right\rangle\left\langle 00\right| + \left|11\right\rangle \left\langle 11\right| \right) \right)^{\otimes n} ~,
\end{equation}
is just a mixture of the states $\omega_{r}^{n} \in \{ \left|\phi_{\vec{v}}^{r} \right\rangle \}_{\vec{v}}$, weighted with appropriate chosen probabilities, i.e.,
\begin{equation}
\label{sumOmega}
  \Omega_{\alpha}^{\otimes n} = \sum_{r = 0}^{n} \genfrac(){0 pt}{}{n}{r} \alpha^{n-r} (1-\alpha)^{r} \cdot \omega_{r}^{n}\ ,
\end{equation}
where $\omega_{r}^{n}$ is chosen uniformly at random from the set $\{ \left|\phi_{\vec{v}}^{r} \right\rangle \}_{\vec{v}}$.
This is indeed the case, because the quantum state $\omega_{r}^{n}$ can be written as an equal mixture of all possible states of degree $r$, i.e.,
\begin{equation}
  \omega_{r}^{n} = \frac{1}{2^{r}\genfrac(){0 pt}{}{n}{r}} \left( \sum_{\vec{v}:\deg(\vec{v}) = r} \left|\phi_{\vec{v}}^{r} \right\rangle \left\langle \phi_{\vec{v}}^{r} \right| \right)\ ,
\end{equation}
and be put into equation (\ref{sumOmega}), which yields 
\begin{equation}
\label{sumChaOmega}
  \Omega_{\alpha}^{\otimes n} = \sum_{r=0}^{n} \alpha^{n-r} \left( \frac{1-\alpha}{2} \right)^{r} \left( \sum_{\vec{v}:\deg(\vec{v}) = r} \left| \phi_{\vec{v}}^{r} \right\rangle \left\langle \phi_{\vec{v}}^{r} \right| \right) ~.
\end{equation}
Comparing (\ref{sumChaOmega}) and (\ref{defNewOmega}), one can easily see that  (\ref{sumOmega}) must be correct.

Equation (\ref{sumOmega}) implies that it is enough to analyze the behavior of the EDP $\mathcal{E}$ for the input state $\omega_{r}^{n}$. Ambainis and Yang have shown in \cite{entanglementDistillationUpperbound}  that 
\begin{equation}
\label{newProtocol}
  F^{+}(\mathcal{E}(\omega_{r}^{n})) \leq 1 - \frac{r}{2n}\ ,
\end{equation}
where $F^{+}(\cdot)$ is defined by
$F^{+}(\sigma) =  \left\langle \Phi^{+} \right| \sigma \left|\Phi^{+} \right\rangle$.
It is not difficult to show that there exists another EDP, denoted by $\mathcal{E}^{+}$, for any input $\rho^{\otimes n}$,  such that $F(\mathcal{E}(\rho^{\otimes n})) = F^{+}(\mathcal{E}^{+}(\rho^{\otimes n}))$. Therefore, equation  (\ref{newProtocol}) is equivalent to $F(\mathcal{E}(\omega_{r}^{n})) \leq 1 - \frac{r}{2n}$. Together with  (\ref{sumOmega}), we get
\begin{equation}
	F(\mathcal{E}(\Omega_{\alpha}^{\otimes n})) \leq \sum_{r = 0}^{n} \genfrac(){0 pt}{}{n}{r} \alpha^{n-r} (1-\alpha)^{r} \cdot \left( 1 - \frac{r}{2n} \right) = \frac{1 + \alpha}{2}\ ,
\end{equation}
which proves the statement. Taking the first of the $n$ two-qubit states proves tightness of our bound (see Proposition \ref{weakEntangledFidelity} in the appendix).
\end{proof}
Note that it is crucial for this proof that the entangled input states are $\Omega_{\alpha}$-states. If, for example, the input states were \emph{pure}, there would exist an entanglement distillation protocol without communication \cite{EntanglementDistillationPossible}, which would actually increase the fully entangled fraction. The states  $\Omega_{\alpha}$ are, actually,  the first class of two-qubit states for which it is provably impossible to distill entanglement without communication. It is an interesting open problem  to find other states with this property.\newline

Before we  prove our main result, we  state a well-established result about the simultaneous block diagonalization of the projectors corresponding to two projective measurements with binary outputs \cite{MasanesBlockDiagonal}.
\begin{lemma}
\label{simultDiagonal}
Let $P_0,P_1,Q_0,Q_1$ be four projectors acting on a Hilbert-space $\mathcal{H}$ such that $P_0 + P_1 = I$ and $Q_0 + Q_1 = I$. There exists an orthonormal basis in $\mathcal{H}$ where the four projectors $P_0,P_1,Q_0,Q_1$ are simultaneously block diagonal, in blocks of size $1 \times 1$ or $2 \times 2$. 
\end{lemma}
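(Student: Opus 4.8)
Since $P_1 = I - P_0$ and $Q_1 = I - Q_0$, a basis that simultaneously block-diagonalizes the two orthogonal projectors $P:=P_0$ and $Q:=Q_0$ in blocks of size at most $2$ automatically does the same for all four; so the plan is to treat $P$ and $Q$. First I would split off the four ``trivial'' common reducing subspaces
\[
H_{00}=\operatorname{ran}P\cap\operatorname{ran}Q,\quad H_{01}=\operatorname{ran}P\cap\ker Q,\quad H_{10}=\ker P\cap\operatorname{ran}Q,\quad H_{11}=\ker P\cap\ker Q.
\]
Each $H_{ab}$ is invariant under both $P$ and $Q$, and both act on it as a scalar ($0$ or $1$), so any orthonormal basis of $H_{ab}$ contributes only $1\times1$ blocks. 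Let $H_{\mathrm g}$ be the orthogonal complement of $H_{00}\oplus H_{01}\oplus H_{10}\oplus H_{11}$ in $\mathcal H$; it is again $P$- and $Q$-invariant, and inside $H_{\mathrm g}$ all four intersections above are $\{0\}$. It then remains to block-diagonalize $P,Q$ restricted to $H_{\mathrm g}$ in $2\times2$ blocks.

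Next I would work with the positive self-adjoint operator $T:=PQP$ restricted to $K:=\operatorname{ran}P\cap H_{\mathrm g}$. One checks that its spectrum lies in the \emph{open} interval $(0,1)$: a unit eigenvector $u\in K$ with $Tu=u$ forces (via the short norm identity $\|u\|^2=\langle u,Tu\rangle=\langle u,Qu\rangle=\|Qu\|^2$) that $Qu=u$, hence $u\in H_{00}$; and $Tu=0$ forces $Qu=0$, hence $u\in H_{01}$ — both impossible in $H_{\mathrm g}$. For a unit eigenvector $u\in K$ with $Tu=\lambda u$, put $V_u:=\operatorname{span}\{u,Qu\}$. Then $Qu\neq0$ (since $\langle Qu,Qu\rangle=\langle u,Tu\rangle=\lambda>0$) and $Qu\notin\mathbb{C}u$ (otherwise $u$ would be an eigenvector of $Q$, placing it in $H_{00}$ or $H_{01}$), so $\dim V_u=2$; and since $Pu=u$, $P(Qu)=PQPu=\lambda u$ and $Q(Qu)=Qu$, the space $V_u$ is invariant under $P$ and $Q$, hence under $P_0,P_1,Q_0,Q_1$.

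Then I would reassemble $H_{\mathrm g}$ from these $2\times2$ pieces. Choosing the eigenvectors $u$ of $T$ orthonormal (across and within eigenspaces), the identities $\langle u,Qu'\rangle=\langle u,PQPu'\rangle=\lambda'\langle u,u'\rangle$ and $\langle Qu,Qu'\rangle=\langle u,Qu'\rangle$ show that $V_u$ and $V_{u'}$ are mutually orthogonal for $u\neq u'$. Let $W:=\bigoplus_u V_u\subseteq H_{\mathrm g}$. The $u$'s span $K=\operatorname{ran}P\cap H_{\mathrm g}$ and the $Qu$'s span $Q(K)=\operatorname{ran}Q\cap H_{\mathrm g}$ (here one uses $\ker P\cap\operatorname{ran}Q\cap H_{\mathrm g}=\{0\}$), so the orthocomplement of $W$ in $H_{\mathrm g}$ lies in $\ker P\cap\ker Q\cap H_{\mathrm g}=\{0\}$, whence $W=H_{\mathrm g}$. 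Taking an orthonormal basis of each $V_u$ together with orthonormal bases of the $H_{ab}$ yields the claimed basis.

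The one genuinely delicate point is the infinite-dimensional case: if $T$ has continuous spectrum, the eigenvector bookkeeping above must be replaced by the spectral-multiplicity (direct-integral) decomposition of $T$, with the $2\times2$ reduction performed fiberwise. For the use made of this lemma in the present paper only finite-dimensional $\mathcal H$ occurs, where the argument is complete as stated; alternatively one may simply invoke the two-projections theorem in the form of \cite{MasanesBlockDiagonal}, namely that the $*$-algebra generated by two projectors is unitarily conjugate into the $2\times2$ block-diagonal matrices.
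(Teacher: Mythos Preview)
Your proof is correct and rests on the same core idea as the paper's: diagonalize the sandwich operator $PQP$ (the paper uses $Q_0P_0Q_0$) and, for each eigenvector, span a two-dimensional subspace invariant under both projectors. The organization differs slightly. You first peel off the four ``trivial'' intersections $H_{ab}$ to reach generic position (the classical Halmos two-projections setup) and then show every eigenvalue of $T=PQP$ on the remaining piece lies in $(0,1)$; the paper instead handles the degenerate eigenvalues inline as the $1\times1$ cases. Your argument that $W=H_{\mathrm g}$ is slightly cleaner than the paper's treatment of the ``remaining'' eigenvectors, and your explicit caveat about continuous spectrum in infinite dimensions is apt---the paper's proof, like yours, tacitly assumes an eigenbasis, which is fine for the finite-dimensional application here.
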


\noindent
For completeness reasons we provide a proof of Lemma \ref{simultDiagonal} in the appendix (taken from \cite{MasanesBlockDiagonal}).

\begin{theorem}
\label{resultGeneralDistillation}
There exists no non-locality distillation protocol $\mathcal{N}^{p}_{q}$, where the $q$-PR-boxes must be isotropic, for any $q \in [\frac{3}{4}, \frac{2 + \sqrt{2}}{4}]$ and $p > q + g(q)$ with
\[
 g(q) = \frac{3}{4} - q +  \frac{\sqrt{2} - 1}{4} \sqrt{4 (2 q - 1)^2 - 1} ~.
\]
\end{theorem}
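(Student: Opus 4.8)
The plan is to argue by contradiction along the lines indicated in the introduction: encode a hypothetical NDP as a communication-free quantum protocol acting on copies of a suitable weakly entangled mixed state, reduce to qubits by simultaneous block diagonalisation, and then combine the impossibility of entanglement distillation (Theorem~\ref{nonDistabilittyTheorem}) with the fidelity/non-locality trade-off (Lemma~\ref{weakEntangledMaximallCHSH}). First I would suppose that an NDP $\mathcal{N}^{p}_{q}$ as in the statement exists, using $n$ isotropic $q$-PR-boxes with $q\in[\tfrac34,\tfrac{2+\sqrt2}{4}]$ and $p>q+g(q)$, and set $\alpha:=\sqrt{4(2q-1)^2-1}$. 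Then $\alpha\in[0,1]$ and $\tfrac12+\tfrac14\sqrt{1+\alpha^2}=q$, so by Lemma~\ref{wernerStateSimulation} each box is reproduced exactly by local measurements on one copy of $\Omega_\alpha$. Replacing every box this way turns the classical NDP into a quantum, communication-free protocol in which Alice and Bob share $\Omega_\alpha^{\otimes n}$ plus local ancillas and shared randomness, apply local unitaries and local adaptive measurements, and output $a,b$ with exactly the distribution $P(a,b\vert x,y)$ of a $p$-PR-box.

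Next I would defer all measurements to the end — legitimate even for the adaptive box uses, since every box is non-signalling — so that for each input $x$ Alice is left with a single binary projective measurement on her space, and Bob likewise. Lemma~\ref{simultDiagonal}, applied on each side, simultaneously block-diagonalises the four projectors into blocks of dimension $\le 2$; since the measurement operators are block diagonal the output statistics depend only on the block-diagonal part of the shared state, and I obtain a decomposition
\[
P(a,b\vert x,y)=\sum_{ij}p_{ij}\,\mathrm{Tr}\!\big((M^{x}_{a}\otimes N^{y}_{b})\,\sigma^{ij}\big),
\]
where the $\sigma^{ij}$ are (padded) two-qubit states, $p_{ij}\ge 0$, $\sum_{ij}p_{ij}=1$, and the $M^{x}_{a},N^{y}_{b}$ are one-qubit projective measurements. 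From this, the CHSH value $NL[P]$ equals the convex combination $\sum_{ij}p_{ij}\,(E^{ij}_{00}+E^{ij}_{01}+E^{ij}_{10}-E^{ij}_{11})$ of signed CHSH expressions on the $\sigma^{ij}$, and each term is at most $NL[\sigma^{ij}]$, so $NL[P]\le\sum_{ij}p_{ij}NL[\sigma^{ij}]$. To bound the fidelities I would observe that each individual $\sigma^{ij}$ is the state Alice and Bob hold conditioned on Alice's block measurement yielding $i$ and Bob's yielding $j$ (probability $p_{ij}$); thus $\sigma^{ij}$ is the conditional output of a probabilistic, communication-free entanglement-distillation protocol run on $\Omega_\alpha^{\otimes n}$ — local unitaries, then the local block tests used as a ``keep/fail'' filter, then local unitaries. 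Theorem~\ref{nonDistabilittyTheorem} then gives $F(\sigma^{ij})\le\tfrac{1+\alpha}{2}=F(\Omega_\alpha)$ for every $(i,j)$, hence $\sum_{ij}p_{ij}F(\sigma^{ij})\le F(\Omega_\alpha)$.

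Now Lemma~\ref{weakEntangledMaximallCHSH} applies and yields $NL[P]\le\sum_{ij}p_{ij}NL[\sigma^{ij}]\le NL[\Omega_\alpha]+c(\alpha)$. Using $NL[\Omega_\alpha]=2\sqrt{1+\alpha^2}$ (the Horodecki criterion for the Bell-diagonal $\Omega_\alpha$, whose correlation matrix is $\mathrm{diag}(\alpha,-\alpha,1)$) and $c(\alpha)=\alpha(2\sqrt2-2)-2\sqrt{1+\alpha^2}+2$, the right-hand side collapses to $2+2(\sqrt2-1)\alpha$; converting back via $p=\tfrac12+\tfrac18 NL[P]$ and the choice of $\alpha$ gives $p\le\tfrac34+\tfrac{\sqrt2-1}{4}\sqrt{4(2q-1)^2-1}=q+g(q)$, contradicting $p>q+g(q)$. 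This closes the argument.

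The substance is not this bookkeeping but the two structural claims underpinning it. First, one must verify that an arbitrary NDP — with delayed inputs, intertwined boxes, and measurement outcomes that steer later box inputs — is genuinely equivalent to a fixed quantum protocol of the ``local unitaries, then one binary measurement per input'' shape to which Lemma~\ref{simultDiagonal} applies, and that discarding the off-block part of the shared state does not change $P(a,b\vert x,y)$. More delicate still is the fidelity step: I am relying on the ``post-select on the local block index'' operation being a bona fide probabilistic non-interactive EDP in the sense of Theorem~\ref{nonDistabilittyTheorem} — i.e. that the Ambainis--Yang estimate $F^{+}(\mathcal{E}(\omega^{n}_{r}))\le 1-\tfrac{r}{2n}$ is robust to post-selection and to arbitrary local ancillas and shared randomness. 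It is exactly this robustness that lets me bound each $F(\sigma^{ij})$ separately by $F(\Omega_\alpha)$, and thereby sidestep the obstacle that the optimal local unitaries implicit in $F(\sigma^{ij})$ depend on both block indices and cannot be coordinated by a single communication-free protocol. Granting these, everything else is assembled from Lemmas~\ref{wernerStateSimulation}, \ref{weakEntangledMaximallCHSH}, \ref{simultDiagonal} and Theorem~\ref{nonDistabilittyTheorem}.
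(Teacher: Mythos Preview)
Your overall architecture is exactly the paper's: simulate the isotropic boxes by measurements on $\Omega_\alpha$ (Lemma~\ref{wernerStateSimulation}), defer and purify measurements, block-diagonalise the four projectors on each side (Lemma~\ref{simultDiagonal}), decompose the behaviour as $P(a,b\vert x,y)=\sum_{ij}p_{ij}P_{\sigma^{ij}}(a,b\vert x,y)$ with two-qubit states $\sigma^{ij}$, bound $NL[P]\le\sum_{ij}p_{ij}NL[\sigma^{ij}]$, feed the fidelity inequality $\sum_{ij}p_{ij}F(\sigma^{ij})\le F(\Omega_\alpha)$ into Lemma~\ref{weakEntangledMaximallCHSH}, and convert $NL[\Omega_\alpha]+c(\alpha)$ back to $q+g(q)$. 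The only substantive divergence from the paper is how you obtain that fidelity inequality, and there your argument has a gap.

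You propose to bound each $F(\sigma^{ij})$ separately by regarding $\sigma^{ij}$ as the conditional output of a post-selected non-interactive protocol and invoking Theorem~\ref{nonDistabilittyTheorem}. But that theorem, as stated and proved, covers \emph{trace-preserving} protocols: its proof writes $\Omega_\alpha^{\otimes n}=\sum_r\binom{n}{r}\alpha^{n-r}(1-\alpha)^r\,\omega_r^n$ and uses that a linear $\mathcal{E}$ carries this to the same convex combination of the $\mathcal{E}(\omega_r^n)$, then averages the Ambainis--Yang bound $1-r/(2n)$ against the binomial weights to get $(1+\alpha)/2$. A post-selected map is trace-decreasing and, after renormalisation, nonlinear on density matrices; the success probabilities on the different $\omega_r^n$ reweight the mixture, and nothing forbids a local filter that preferentially accepts the low-$r$ (high-fidelity) components. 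So $F(\sigma^{ij})\le(1+\alpha)/2$ does not follow from Theorem~\ref{nonDistabilittyTheorem}, and the ``robustness to post-selection'' you explicitly flag as needed is precisely what is not established.

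The paper does not post-select. It instead builds a single \emph{trace-preserving} local CPM $\mathcal{M}$ with product Kraus operators $\bar U_A^i\otimes\bar U_B^j$, where $U_A^i:\mathcal{H}_A^i\to\mathcal{H}_2^A$ and $U_B^j:\mathcal{H}_B^j\to\mathcal{H}_2^B$ are block-dependent two-dimensional unitaries, and argues that these can be chosen so that $F(\hat\rho_{AB})=\sum_{ij}p_{ij}F(\rho_{AB}^{ij})$ for the single two-qubit output $\hat\rho_{AB}=\mathcal{M}(\tilde\rho_{AB})$. Since $\hat\rho_{AB}$ is now the output of a deterministic non-interactive EDP on $\Omega_\alpha^{\otimes n}$, one application of Theorem~\ref{nonDistabilittyTheorem} gives $F(\hat\rho_{AB})\le(1+\alpha)/2$ directly, and the averaged fidelity bound follows without any conditioning. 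Your coordination worry --- that the optimal local unitaries realising $F(\sigma^{ij})$ depend on both block indices --- is exactly the issue the paper confronts by this construction rather than circumvents by post-selection; that is the step you should replace in your write-up.
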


Figure \ref{maximalIncrease} shows which $p$-PR-box could at most be distilled out of $q$-PR-boxes. Note that $g(3/4) = 0$ and $g((2 + \sqrt{2})/4) = 0$.

\begin{proof}
The proof is by contradiction. Assume there exists an NDP which uses isotropic $q$-PR-boxes and has the same input/output behavior as a $p$-PR-box such that $p > q + g(q)$ for some $q \in [\frac{3}{4}, \frac{2 + \sqrt{2}}{4}]$.
As argued in Section \ref{NDP}, every NDP can be represented by local operations and terminals on Alice's and Bob's side, respectively. Let $P(a,b | x,y)$ denote the input/output behavior of the NDP.

The local operations can be represented by local unitary operations and the terminals can be simulated by appropriate measurements on $\Omega_{\alpha}$ states (see Lemma \ref{wernerStateSimulation}). It is always possible to simulate the terminals, because the input boxes are isotropic $q$-PR-boxes with $q \in [\frac{3}{4}, \frac{2 + \sqrt{2}}{4}]$.

By introducing ancilla qubits, all intermediate measurements can be postponed to the end of the quantum circuits. Therefore, the quantum circuits can be represented by unitary operations $\tilde{U}_A^x$ and $\tilde{U}_B^y$, followed by measurements on the resulting  states, where $x$ and $y$ are Alice's and Bob's input bits to the NDP. As the output of the NDP is one bit on Alice's and Bob's site, respectively, the measurements after the unitary operation on Alice's (Bob's) site can each be represented by two \emph{positive operator-valued measurements} (POVMs) with \emph{binary} output. We need two POVMs on each side because the measurements can depend on the input bits $x$ and $y$, respectively. By introducing ancilla qubits, the POVMs can be replaced by projective measurements with binary outputs (Naimark's dilation theorem). 
We denote the resulting measurements by $\{\tilde{P}_A^{x,a}\}_a$ and $\{\tilde{P}_B^{y,b}\}_b$, respectively ($a$ and $b$ denote Alice's and Bob's output bit of the NDP). By merging the unitary operations $\tilde{U}_A^x$, $\tilde{U}_B^y$ and the measurements $\{\tilde{P}_A^{x,a}\}_a$, $\{\tilde{P}_B^{y,b}\}_b$ one gets the measurements
\begin{eqnarray}
\label{measurementA}
\{\hat{P}_A^{x,a}\}_a &=& \{ \tilde{U}_A^x \cdot \tilde{P}_A^{x,a} \cdot (\tilde{U}_A^x)^\dagger \}_a ~,~ x \in \{0,1\} \\
\{\hat{P}_B^{y,b}\}_b &=& \{ \tilde{U}_B^y \cdot \tilde{P}_B^{y,b} \cdot (\tilde{U}_B^y)^\dagger \}_b ~,~ y \in \{0,1\}
\end{eqnarray} 
applied on the input qubits (i.e., the $\Omega_\alpha$ and ancilla qubits).

Lemma \ref{simultDiagonal} tells us that there exists a basis such that the measurement operators of equation (\ref{measurementA}) can all be written in the \emph{same} block diagonal form, with blocks of size at most $2 \times 2$. Let us denote these transformed measurements by $\{P_A^{x,a}\}_a$, with $x \in \{0,1\}$, and the corresponding unitary basis change from the standard to the block diagonal basis by $U_A^{diag}$. Note that $U_A^{diag}$ does \emph{not} depend on Alice's input bit $x$ to the NDP because it is the same basis transformation for all four projectors. The same procedure can be applied on Bob's side, yielding block diagonal measurement operators $\{P_B^{y,b}\}_b$, with $y \in \{0,1\}$, and unitary operation $U_B^{diag}$. 

Let $\{ \bar{E}_A^i \}_i$ be a POVM where $\bar{E}_A^i$ is a projector onto the subspace of the simultaneous $i$th diagonal block of the four projectors $P_A^{x,a}$ with $x,a \in \{0, 1\}$. If $\mathcal{H}_A$ denotes the whole Hilbert space on Alice's side and $\mathcal{H}_A^i$ is the two dimensional subspace of $\mathcal{H}_A$ on which $\bar{E}_A^i$ projects, we can write the space $\mathcal{H}_A$ as a direct sum of orthogonal subspaces, i.e., $\mathcal{H}_A = \bigoplus_i \mathcal{H}_A^i$.  Furthermore, the projector $\bar{E}_A^i$ which acts on the whole space $\mathcal{H}_A$ can be decomposed as
\begin{equation}
\bar{E}_A^i = 0 \oplus 0 \oplus ... \oplus 0 \oplus E_A^i \oplus 0 \oplus ... \oplus 0 ~,
\end{equation}
where $E_A^i : \mathcal{H}_A^i \rightarrow \mathcal{H}_A^i$ is at the $i$th position in the direct sum and is actually just the identity on $\mathcal{H}_A^i$. Similarly for Bob.

If we measure the POVMs $\{ \bar{E}_A^i \}_i$ and $\{ \bar{E}_B^j \}_j$ on Alice's and Bob's side, respectively, after the unitary operations $U_A^{diag}$ and $U_B^{diag}$ but before the measurements $\{P_A^{x,a}\}_a$ and $\{P_B^{y,b}\}_b$, the outcome statistic of Alice and Bob does not change. In order to see this, we compute the corresponding input/output statistic of Alice and Bob. 
Let us denote the state of Alice and Bob right after the unitaries $U_A^{diag}$ and $U_B^{diag}$ by $\rho_{AB}$. The mixed state after the measurements $\{ \bar{E}_A^i \}_i$ and $\{ \bar{E}_B^j \}_j$ is then given by
\begin{equation}
\label{eqnStateAfterMeas}
 \tilde{\rho}_{AB} = \sum_i \sum_j (\bar{E}_A^i \otimes \bar{E}_B^j) \cdot \rho_{AB} \cdot (\bar{E}_A^i \otimes \bar{E}_B^j) = \sum_{ij} p_{ij} \cdot \bar{\rho}_{AB}^{ij}  ~,
\end{equation}
with $\tilde{\rho}_{AB}$ on $\mathcal{H}_A \otimes \mathcal{H}_B$ and $p_{ij} = \Tr((\bar{E}^i_A \otimes \bar{E}^j_B) \rho_{AB})$ and where $\bar{\rho}_{AB}^{ij} = 0 \oplus 0 \oplus ... \oplus 0 \oplus \rho_{AB}^{ij} \oplus 0 \oplus ... \oplus 0$ are the normalized post-measurement states with $\rho_{AB}^{ij}$ on $\mathcal{H}_A^i \otimes \mathcal{H}_B^j$, i.e., $\bar{\rho}_{AB}^{ij}$ is just an embedding of $\rho_{AB}^{ij}$ into $\mathcal{H}_A \otimes \mathcal{H}_B$.
Using the cyclicity of the trace and the fact that 
\begin{equation}
 P_A^{x,a} = \sum_i \bar{E}_A^i \cdot P_A^{x,a} \cdot \bar{E}_A^i ~~ \forall x,a \in \{0,1\} ~,~ P_B^{y,b} = \sum_j \bar{E}_B^j \cdot P_B^{y,b} \cdot \bar{E}_B^j ~~ \forall y,b \in \{0,1\}\ ,
\end{equation}
one can easily show that:
\begin{equation}
\label{eqnInputOutput}
 P(a,b \vert x,y) \equiv \Tr \left( (P_A^{x,a} \otimes P_B^{y,b}) \cdot \rho_{AB} \right) = \Tr \left( (P_A^{x,a} \otimes P_B^{y,b}) \cdot \tilde{\rho}_{AB} \right) ~~ \forall x,y,a,b \in \{0,1\}~.
\end{equation}
Therefore, we do not change the input/output statistic by these intermediate measurements.

In the following we will try to map all information onto just two qubits so that we can apply Theorem \ref{nonDistabilittyTheorem} which says the the fully entangled fraction of any two qubit output state of any entanglement distillation protocol does not have a larger fully entangled fraction than the input states, if the input states are $\Omega_\alpha$'s. We will do this mapping in a clever way such that the fully entangled fraction of this two qubit state is equal to the convex combination of the fully entangled fractions of the states in the convex decomposition (see equation (\ref{eqnFullyEn})). In order to achieve this we will proceed as follows.

Let us define new projective measurement operators on the individual subspaces by
\begin{eqnarray}
\bar{Q}_{A}^{i,x,a} &:=& \bar{E}_A^i \cdot P_A^{x,a} \cdot \bar{E}_A^i ~, \\
\bar{Q}_{B}^{j,y,b} &:=& \bar{E}_B^j \cdot P_B^{y,b} \cdot \bar{E}_B^j ~,
\end{eqnarray}
for all $x,y,a,b \in \{0,1\}$ and $\bar{Q}_{A}^{i,x,a} = 0 \oplus 0 \oplus ... \oplus 0 \oplus Q_{A}^{i,x,a} \oplus 0 \oplus ... \oplus 0$ with $Q_{A}^{i,x,a}$ acting on $\mathcal{H}_A^i$. Then, by using equations (\ref{eqnStateAfterMeas}) and (\ref{eqnInputOutput}), the cyclicity of the trace and the idempotence of projectors,  we get
\begin{eqnarray}
P(a,b \arrowvert x,y ) &=& \Tr \left( (P_A^{x,a} \otimes P_B^{y,b}) \cdot \tilde{\rho}_{AB} \right) \nonumber \\
&=& \sum_{ij} p_{ij} \cdot \Tr\left( (P_A^{x,a} \otimes P_B^{y,b}) (\bar{E}_A^i \otimes \bar{E}_B^j) \bar{\rho}_{AB}^{ij} (\bar{E}_A^i \otimes \bar{E}_B^j) \right) \nonumber \\
&=& \sum_{ij} p_{ij} \cdot \Tr \left( (\bar{Q}_{A}^{i,x,a} \otimes \bar{Q}_B^{j,y,b}) \cdot \bar{\rho}_{AB}^{ij} \right) \nonumber \\
&=& \sum_{ij} p_{ij} \cdot \Tr \left( (Q_{A}^{i,x,a} \otimes Q_B^{j,y,b}) \cdot \rho_{AB}^{ij} \right) \nonumber \\
&=& \sum_{ij} p_{ij} \cdot P_{\rho_{AB}^{ij}}(a,b \arrowvert x,y ) ~,
\end{eqnarray}
where $P_{\rho_{AB}^{ij}}(a,b \arrowvert x,y )$ denotes the input/output statistic resulting from measuring the state $\rho_{AB}^{ij}$ by the measurement operators $\{ Q_{A}^{i,x,a} \otimes Q_B^{j,y,b} \}_{a,b}$. 

By using Lemma \ref{weakEntangledMaximallCHSH} and the fact that mixing input/output statistics cannot increase the non-locality we get
\begin{eqnarray}
\label{eqnNLupper}
NL[P(a,b \arrowvert x,y)] &\leq& \sum_{ij} p_{ij} \cdot NL[P_{\rho_{AB}^{ij}}(a,b \arrowvert x,y)] \nonumber \\
&\leq& \sum_{ij} p_{ij} \cdot NL[\rho_{AB}^{ij}] \nonumber \\
&\leq& NL[\Omega_\alpha] + c(\alpha) ~.
\end{eqnarray}
In order to be allowed to apply Lemma \ref{weakEntangledMaximallCHSH} in the last inequality of equation (\ref{eqnNLupper}) we need to prove that 
\begin{equation}
\label{eqnFidelityUpper}
F(\Omega_\alpha) \geq \sum_{ij} p_{ij} \cdot F(\rho_{AB}^{ij}) ~.
\end{equation}
This can be shown as follows. Let us introduce a completely positive map (CPM) $\mathcal{M} : \mathcal{H}_A \otimes \mathcal{H}_B \rightarrow \mathcal{H}_2^A \otimes \mathcal{H}_2^B$, where $\mathcal{H}_2^A$ and $\mathcal{H}_2^B$ are single qubit Hilbert spaces, which we define by $\mathcal{M}(\tilde{\rho}_{AB}) = \sum_{ij} (\bar{U}_A^i \otimes \bar{U}_B^j) \tilde{\rho}_{AB} (\bar{U}_A^i \otimes \bar{U}_B^j)^\dagger$, where the operators $\bar{U}_A^i : \mathcal{H}_A \rightarrow \mathcal{H}_2^A$ and $\bar{U}_B^j : \mathcal{H}_B \rightarrow \mathcal{H}_2^B$ are defined as
\begin{eqnarray}
\bar{U}_A^i &=& 0 \oplus 0 \oplus ... \oplus 0 \oplus U_A^i \oplus 0 \oplus ... \oplus 0 ~, \\
\bar{U}_B^j &=& 0 \oplus 0 \oplus ... \oplus 0 \oplus U_B^j \oplus 0 \oplus ... \oplus 0 ~,
\end{eqnarray}
where $U_A^i : \mathcal{H}_A^i \rightarrow \mathcal{H}_2^A$ and $U_B^j : \mathcal{H}_B^j \rightarrow \mathcal{H}_2^B$ are unitary operations on single qubit subspaces on which $E_A^i$ and $E_B^j$ project, respectively (the CPM $\mathcal{M}$ is actually a trace preserving operations as $\sum_{ij} (\bar{U}_A^i \otimes \bar{U}_B^j)^\dagger \cdot (\bar{U}_A^i \otimes \bar{U}_B^j) = I$). How we choose these unitary operations will be explained later.  As $\tilde{\rho}_{AB}$ is the state after the measurements $\{ \bar{E}_A^i \}_i$ and $\{ \bar{E}_B^j \}_j$ it has block diagonal structure and therefore, if we apply the CPM $\mathcal{M}$ to it, we get the following two qubit state
\begin{eqnarray}
\hat{\rho}_{AB} = \mathcal{M}(\tilde{\rho}_{AB}) &=& \sum_{ij}  (\bar{U}_A^i \otimes \bar{U}_B^j) \cdot \tilde{\rho}_{AB} \cdot (\bar{U}_A^i \otimes \bar{U}_B^j)^\dagger \nonumber \\ 
&=& \sum_{ij} p_{ij} \cdot (U_A^i \otimes U_B^j) \cdot \rho_{AB}^{ij} \cdot (U_A^i \otimes U_B^j)^\dagger ~.
\end{eqnarray}
Computing the fully entangled fraction of $\hat{\rho}_{AB}$ yields
\begin{eqnarray}
\label{eqnFullyEn}
F(\hat{\rho}_{AB}) &=& \max_{U_A,U_B} \langle \Phi^+ | (U_A \otimes U_B) \hat{\rho}_{AB} (U_A \otimes U_B)^\dagger | \Phi^+ \rangle \nonumber \\
&=&   \sum_{ij} p_{ij} \cdot \max_{U_A,U_B} \langle \Phi^+ | (U_A \otimes U_B) (U_A^i \otimes U_B^j) \rho_{AB}^{ij} (U_A^i \otimes U_B^j)^\dagger (U_A \otimes U_B)^\dagger | \Phi^+ \rangle \nonumber \\
&=& \sum_{ij} p_{ij} \cdot F(\rho_{AB}^{ij}) ~,
\end{eqnarray}
where we have chosen the $U_A^i$'s and $U_B^j$'s in such a way that they maximize the fidelity for each $\rho_{AB}^{ij}$  and we therefore get the fully entangled fraction $F(\rho_{AB}^{ij})$ for each two qubit state $\rho_{AB}^{ij}$. Furthermore, as the CPM $\mathcal{M}$ can be replaced by a unitary operation by introducing ancilla qubits we know, due to Theorem \ref{nonDistabilittyTheorem}, that the fully entangled fraction of $\hat{\rho}_{AB}$ is smaller or equal than the fully entangled fraction of $\Omega_\alpha$, i.e., $F(\Omega_\alpha) \geq F(\hat{\rho}_{AB})$. Together with equation (\ref{eqnFullyEn}) this argument finishes the proof of equation (\ref{eqnFidelityUpper}).

Hence, we have proved equation (\ref{eqnFidelityUpper}) and therefore also equation (\ref{eqnNLupper}). This implies that the non-locality of the NDP, denoted by $NL[P(a,b \arrowvert x,y)]$, is smaller or equal than the non-locality of the input state plus some term $c(\alpha)$. By using the the fact that $NL[\Omega_\alpha] = 2 \sqrt{1 + \alpha^2}$ (see Proposition \ref{weakEntangledCHSH} in the appendix) and the relation $q = 1/2 + 1/8 \cdot NL[\Omega_\alpha]$, which is a rescaling of the non-locality from the interval $[2, 2\sqrt{2}]$ to $[3/4,(2 + \sqrt{2})/4]$, we can re-express equation (\ref{eqnNLupper}) by $p \leq q + g(q)$.
But this is in contradiction to our initial assumption, which demanded that the NDP has an input/output statistic which is more non-local than $q + g(q)$. 
\end{proof}

Figure \ref{maximalIncrease} shows the limited distillability of $q$-PR boxes. The largest gap can be found at $q = \frac{1}{2} + \frac{1}{4} \sqrt{\frac{1 + \sqrt{2}}{2}} \approx 0.775$ which results in a gap of around $g(0.775) \approx 0.0225$. Note that the only possible consistent way of how non-locality could be distilled must have step function form. First, it is clear that out of $(q + \epsilon)$-PR-boxes (with $\epsilon > 0$) one can always distill as much non-locality as of $q$-PR-boxes. Second, assume that we can at most distill a $(q + \epsilon)$-PR-box out of $q$-PR-boxes. If we now could distill a $(q + \epsilon + \gamma)$-PR-box out of $m$ $(q + \delta)$-PR-boxes, with $\delta < \epsilon$, and could distill a $(q + \delta)$-PR-box out of $n$ $q$-PR-boxes, we could simulate a $(q + \epsilon + \gamma)$-PR-box with $n \cdot m$ $q$-PR-boxes. But this is a contradiction to our assumption that at most a $(q + \epsilon)$-PR-box can be distilled out of $q$-PR-boxes.

Furthermore, we get an immediate corollary out of Theorem \ref{resultGeneralDistillation}, namely:

\begin{figure}
 \centering
 \input{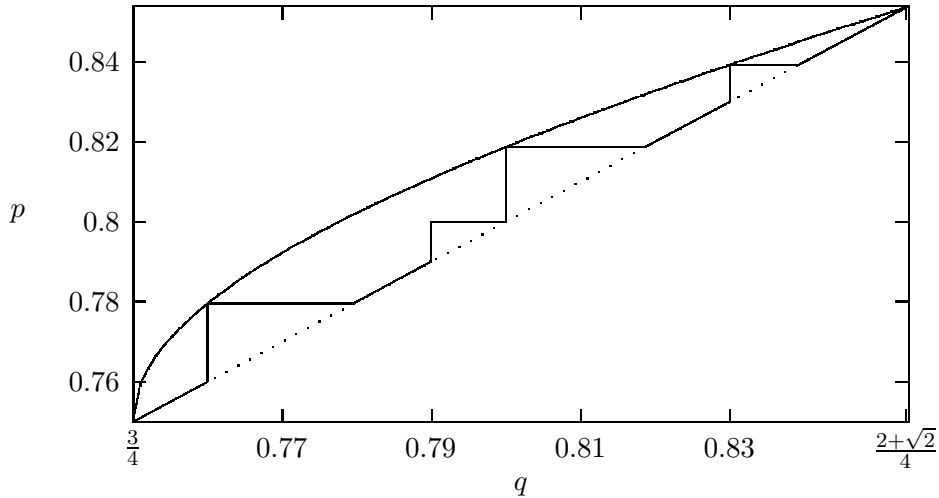}  %to get it: gnuplot Plot1.plt   in console
 \caption{The maximal amount of distillable non-locality is shown. For given isotropic $q$-PR-boxes at most a $p$-PR-box which has a $p$ below the top curve $q + g(q)$ could be distilled. The step function plotted between the upper limit curve and the dotted non-distillability line shows one possible behavior of how much non-locality could be distilled for given $q$-PR-boxes.}
\label{maximalIncrease}
\end{figure}

\begin{corollary}
There are infinitely many $q \in [3/4, (2 + \sqrt{2})4]$ for which it is impossible to distill a $p$-PR-box out of isotropic $q$-PR-boxes for \emph{any} $p > q$. 
\end{corollary}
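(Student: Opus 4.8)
The plan is to package Theorem~\ref{resultGeneralDistillation} into a statement about a single monotone function and then exploit the fact that NDPs compose. For $q\in[\tfrac34,\tfrac{2+\sqrt2}{4}]$ write $D(q)$ for the supremum of those $p$ for which some NDP produces the behaviour of a $p$-PR-box out of isotropic $q$-PR-boxes. Trivially $D(q)\ge q$, and $D$ is non-decreasing: an isotropic $q_1$-PR-box with $q_1\le q_2$ is obtained locally from an isotropic $q_2$-PR-box by mixing in an independent uniformly random (hence local) box, so any NDP that runs on isotropic $q_1$-PR-boxes also runs on isotropic $q_2$-PR-boxes. Theorem~\ref{resultGeneralDistillation} is exactly the assertion $D(q)\le f(q):=q+g(q)=\tfrac34+\tfrac{\sqrt2-1}{4}\sqrt{4(2q-1)^2-1}$. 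I will use two elementary facts about $f$: on $[\tfrac34,\tfrac{2+\sqrt2}{4}]$ the quantity $4(2q-1)^2-1$ rises monotonically from $0$ to $1$, so $f$ is continuous and strictly increasing there, with $f(\tfrac34)=\tfrac34$ and $f(\tfrac{2+\sqrt2}{4})=\tfrac{2+\sqrt2}{4}$, whence $f(q)<\tfrac{2+\sqrt2}{4}$ for every $q<\tfrac{2+\sqrt2}{4}$; and $g(q)>0$ on the open interval (the equation $g=0$ reduces to a quadratic whose only roots are the two endpoints), which is why the corollary is not already a consequence of the theorem and why composition is genuinely needed.

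The crucial point is that $q^\ast:=D(q)$ is always a fixed point of $D$. First, composing NDPs gives $D(p)\le D(q)$ for every $p<D(q)$: for such $p$ there is an NDP reaching some $p'>p$, and replacing its output by a symmetrized, noise-degraded copy of strength exactly $p$ gives an NDP producing an isotropic $p$-PR-box from isotropic $q$-PR-boxes; hence any NDP acting on isotropic $p$-PR-boxes can be implemented by first manufacturing each of its input boxes from fresh copies of isotropic $q$-PR-boxes, so whatever is distillable from $p$-PR-boxes is distillable from $q$-PR-boxes. Second, suppose $D(q^\ast)>q^\ast$. Then some NDP extracts from isotropic $q^\ast$-PR-boxes an output whose CHSH value strictly exceeds that of a $q^\ast$-PR-box; replacing its input boxes by isotropic $s$-PR-boxes with $s<q^\ast$ changes this output CHSH value continuously (it is multilinear in the box correlations), so for $s$ sufficiently close to $q^\ast$ one still obtains $D(s)>q^\ast$, contradicting $D(s)\le D(q)=q^\ast$ (which holds since $s<q^\ast$). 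Therefore $D(q^\ast)\le q^\ast$, and since also $D(q^\ast)\ge q^\ast$ we conclude $D(q^\ast)=q^\ast$.

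The corollary now follows by contradiction. The two endpoints are fixed points of $D$ because $g$ vanishes there ($D(\tfrac34)\le\tfrac34+g(\tfrac34)=\tfrac34$, and similarly at $\tfrac{2+\sqrt2}{4}$, consistent with Lemmas~\ref{corollarryClassicalBound} and~\ref{corollarryQuantumBound}). Suppose the fixed-point set of $D$ in $[\tfrac34,\tfrac{2+\sqrt2}{4}]$ were finite, say $v_1=\tfrac34<v_2<\dots<v_k=\tfrac{2+\sqrt2}{4}$ with $k\ge2$. Pick any $q$ in the nonempty open interval $(v_{k-1},v_k)$. By the previous paragraph $D(q)$ is a fixed point, hence one of the $v_i$; since $D(q)\ge q>v_{k-1}$ this forces $D(q)=v_k=\tfrac{2+\sqrt2}{4}$. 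But Theorem~\ref{resultGeneralDistillation} gives $D(q)\le f(q)<f(v_k)=v_k$, using that $f$ is strictly increasing and $q<v_k$ --- a contradiction. Hence $D$ has infinitely many fixed points in $[\tfrac34,\tfrac{2+\sqrt2}{4}]$, and at each such $q$ no $p$-PR-box with $p>q$ can be distilled from isotropic $q$-PR-boxes.

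The part I expect to be the main obstacle is making precise the two soft ingredients behind the fixed-point claim, within the box-as-terminal formalism of Section~\ref{NDP}: that the output of an NDP may be assumed isotropic of the same CHSH value (a shared-randomness symmetrization of the output), and that the output behaviour of a fixed NDP depends continuously on the strengths of the boxes it consumes (so the degradation step can be pushed arbitrarily close to $q^\ast$). Both are routine; everything else is the elementary monotonicity and one-variable calculus for $f$ sketched above.
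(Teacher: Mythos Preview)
Your proof is correct and follows essentially the same approach as the paper's: both extract from monotonicity and composition of NDPs the ``step-function'' structure of the maximal distillable value $D(q)$, and then use that the gap $g(q)$ shrinks to zero as $q\to(2+\sqrt2)/4$ to force infinitely many plateaus (your fixed points). Your treatment is more careful than the paper's one-line sketch---in particular, your continuity argument cleanly handles the possibility that $D(q)$ is a non-achieved supremum, a point the paper glosses over.
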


\begin{proof}
That there are infinitely many points where non-locality distillation is impossible follows from the fact that any non-locality distillation protocol has the above described step function feature. This is the case because the curve $q + g(q)$ and the line $q$ are getting closer and closer for larger $q$ and finally meet at the point $(2 + \sqrt{2})/4$. 
\end{proof}

\section{Concluding Remarks}
\label{sectionConclusion}
The joint behavior of the two parts of a bipartite input-output system is {\em non-local\/} 
if it cannot be explained by shared (classical) information. The interest of non-locality 
comes from two facts: First, it is {\em real}, i.e., appears as the measurement-outcome 
behavior of certain entangled quantum states | even if the measurement events are space-like 
separated. This does not formally contradict relativity as long as the system is {\em non-signaling},
i.e., does not allow for message transmission. However, it contradicts the {\em spirit\/}  of 
relativity, which is  the reason why Einstein {\em et al.}~\cite{epr} had asked for local explanations.
Second, it is {\em useful\/} for information processing, e.g., for reducing the communication complexity 
or for achieving unconditional confidentiality. Since non-locality is more useful if it is stronger, 
non-locality amplification is a natural goal.

Only little is known on whether non-locality amplification was possible: 
Two bounds cannot be over-stepped, namely the 
Bell and Tsirelson bounds. On the other hand, {\em perfect non-locality\/} cannot be obtained
from imperfect. Here, we show the first result stating that only \emph{limited} distillability could be possible for a large class of systems, namely isotropic quantum-physically realizable approximations to 
PR-boxes. Our results imply that there are \emph{infinitely} many $q \in [3/4, (2 + \sqrt{2})/4]$ where non-locality distillation is impossible at all. 

It is somewhat remarkable that our result, which talks only about classical systems and circuits,
has a ``quantum-physical proof'': It makes (central) use of facts from quantum theory and information 
science. An example is a novel result stating that the fully entangled fraction of a class of mixed entangled states cannot be increased 
non-interactively at all. This is the first result  of this strength, and it is a challenging open problem
to find other  states with this property. 

Our result complements~\cite{FosterBox}, where it is shown that certain non-isotropic non-local systems 
{\em can} be distilled. This may suggest that the amount of CHSH violation is a good measure for non-locality 
only for isotropic systems. 
We state as an open problem to find a general proof  that 
isotropic non-locality can {\em never\/} be amplified.

\ \\
\noindent
{\bf Acknowledgments.}
We thank Nicolas Gisin, Renato Renner and J\"urg Wullschleger 
for many helpful discussions, and an anonymous referee for pointing out to us the 
reference about simultaneous block-diagonalization of two bipartite measurements which we could apply in order to correct an earlier version of this paper.

\newpage

\newpage

\appendix

\renewcommand{\theequation}{A-\arabic{equation}}
% redefine the command that creates the equation no.
\setcounter{equation}{0}  % reset counter 
\section*{APPENDIX}  % use *-form to suppress numbering

\setcounter{lemma}{1}

\section{Proof of Lemma \ref{corollarryQuantumBound}}

\begin{lemma}{\bf (Tsirelson \cite{TsirelsonBound})}
There exists no non-locality distillation protocol $\mathcal{N}^{p}_{q}$, where the $q$-PR-boxes must be quantum-physically realizable, with $q \leq \frac{2 + \sqrt{2}}{4}$ and $p > \frac{2 + \sqrt{2}}{4}$. 
\end{lemma}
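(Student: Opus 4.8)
\section*{Proof proposal}

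\textbf{Strategy.} The plan is to argue by contradiction and to show that a distillation protocol built from \emph{quantum-realizable} $q$-PR-boxes is, seen from the outside, nothing but a measurement-outcome scenario on a single bipartite quantum state; hence its CHSH value cannot exceed Tsirelson's bound $2\sqrt 2$, so the protocol's output is a $p$-PR-box with $p=\tfrac12+\tfrac18\cdot NL\le\tfrac{2+\sqrt2}{4}$, contradicting $p>\tfrac{2+\sqrt2}{4}$. So suppose such an NDP $\mathcal{N}^p_q$ exists with $q\le\tfrac{2+\sqrt2}{4}$, $p>\tfrac{2+\sqrt2}{4}$, and whose $q$-PR-boxes are quantum-physically realizable.

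\textbf{Reducing the protocol to a local quantum circuit.} Using the terminal language of Section~\ref{NDP}, the NDP consists only of local operations interleaved with local terminals, with nothing passing between Alice and Bob. Since each $q$-PR-box is quantum-realizable, each of its two terminals is a local measurement on one half of some bipartite state; letting $\rho$ be the tensor product of all these states together with local ancillas, Alice's entire run becomes a fixed local quantum circuit taking classical input $x$ to classical output $a$, and likewise Bob's with $y,b$. The ``intertwined boxes'' feature is harmless: it only describes how the shared correlations were prepared, not any communication, so each party's side remains self-contained. By adjoining ancillas one defers every intermediate measurement to the end (deferred-measurement principle), so for input $x$ Alice's circuit is a unitary $\tilde U_A^x$ followed by a single binary projective measurement $\{\tilde P_A^{x,0},\tilde P_A^{x,1}\}$ (projectivity by Naimark's dilation), and similarly $\tilde U_B^y$, $\{\tilde P_B^{y,0},\tilde P_B^{y,1}\}$ for Bob. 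This is exactly the normal form used in the proof of Theorem~\ref{resultGeneralDistillation}.

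\textbf{Applying Tsirelson's inequality.} Define the Hermitian $\pm1$ observables
\[
A_x=(\tilde U_A^x)^{\dagger}\big(\tilde P_A^{x,0}-\tilde P_A^{x,1}\big)\tilde U_A^x,\qquad
B_y=(\tilde U_B^y)^{\dagger}\big(\tilde P_B^{y,0}-\tilde P_B^{y,1}\big)\tilde U_B^y,
\]
so that $A_x^2=B_y^2=I$ and $[A_x,B_y]=0$, and the correlators of the protocol satisfy $E_{x,y}=\Tr\!\big((A_x\otimes B_y)\,\rho'\big)$ for the global state $\rho'$ (the state $\rho$ padded with ancillas). By the definition (\ref{CHSHeqn}) of CHSH-type non-locality, the protocol's non-locality equals $\langle C\rangle_{\rho'}$ with $C=A_0\otimes B_0+A_0\otimes B_1+A_1\otimes B_0-A_1\otimes B_1$. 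Writing $C=A_0\otimes(B_0+B_1)+A_1\otimes(B_0-B_1)$ and using $A_x^2=B_y^2=I$, a one-line computation gives $C^2=4I+[A_0,A_1]\otimes[B_1,B_0]$, hence $\|C^2\|\le4+\|[A_0,A_1]\|\,\|[B_0,B_1]\|\le 8$, so $\langle C\rangle\le\|C\|\le 2\sqrt2$. Therefore the NDP realizes a $p$-PR-box with $p=\tfrac12+\tfrac18\langle C\rangle\le\tfrac{2+\sqrt2}{4}$, contradicting $p>\tfrac{2+\sqrt2}{4}$.

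\textbf{Where the difficulty lies.} The Tsirelson step itself is short; the genuine obstacle is justifying rigorously that an elaborate adaptive protocol — with delayed inputs, classical post-processing, and intertwined boxes — really reduces \emph{without loss} to two commuting families of $\pm1$ observables on one shared state. One must check that adaptivity of Alice's later measurement choices on her earlier outcomes is absorbed into the unitary-then-measure normal form, and that intertwining never smuggles in signaling. I would handle this exactly as the main proof does: model every box as a pair of local terminals, replace local operations by unitaries on enlarged registers, and defer all measurements, so that the resulting operators on each side act only on that side's registers and non-signaling is automatic. Note finally that this argument genuinely uses quantum realizability of the input boxes: for non-quantum $q$-PR-boxes with $q\le\tfrac{2+\sqrt2}{4}$ it fails, consistent with the distillation examples mentioned after Lemma~\ref{corollarryClassicalBound}.
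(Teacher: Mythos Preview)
Your proof is correct and follows essentially the same approach as the paper's own proof: replace the quantum-realizable boxes by measurements on entangled states, absorb all local processing into unitaries plus a final binary measurement, and then invoke Tsirelson's bound on the resulting observables to reach a contradiction. The only difference is that you spell out the reduction (deferred measurement, Naimark) and the Tsirelson inequality (the $C^{2}=4I+[A_{0},A_{1}]\otimes[B_{1},B_{0}]$ computation) explicitly, whereas the paper simply asserts the reduction and cites \cite{TsirelsonBound} for the bound.
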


\begin{proof}
The proof is by contradiction. Assume there exists an NDP which uses quantum-physically obtainable $q$-PR-boxes and has the same input/output behavior as a $p$-PR-box such that $p > \frac{2 + \sqrt{2}}{4}$.
By assumption, the behavior of the $q$-PR-boxes can be obtained by measuring some entangled states. 
It can be shown that every such circuit which consists of classical local operations, shared randomness,
 and measurement operators on entangled quantum states is equivalent to some Hermitian measurement operators $X_{A}^0$, $X_{A}^1$, $Y_{B}^0$, and $Y_{B}^1$ operating on these entangled quantum states and ancilla qubits.
The notation $X_{A}^i$ means that if Alice's input bit is $i$, she applies the observable $X_{A}^i$ on her part of the entangled and ancilla qubits; similarly for Bob. The observables $X_{A}^0$, $X_{A}^1$, $Y_{B}^0$, and $Y_{B}^1$ have binary output (but arbitrary finite input alphabet), which corresponds to the output of the NDP. Tsirelson has shown in \cite{TsirelsonBound} that for any observables of finite dimension with binary output which are applied on any entangled state, the maximal CHSH non-locality that can be achieved is $2 \sqrt{2}$, which corresponds to a $\frac{2 + \sqrt{2}}{4}$-PR-box. Therefore, the NDP we have constructed, and transformed into Hermitian measurement operators on entangled quantum states, must also obey this bound. 
\end{proof}

\renewcommand{\theequation}{B-\arabic{equation}}
% redefine the command that creates the equation no.
\setcounter{equation}{0}  % reset counter 
\setcounter{lemma}{3}

\section{Proof of Lemma \ref{wernerStateSimulation}}

\begin{proposition}
\label{weakEntangledCHSH}
The non-locality of $\Omega_{\alpha}$  is 
\[
NL[\Omega_{\alpha}] = 2 \sqrt{1 + \alpha^{2}} ~.
\]
\end{proposition}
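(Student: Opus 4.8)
The plan is to reduce the claim to the Horodecki--Horodecki--Horodecki characterisation of the maximal CHSH violation of a two-qubit state \cite{violatingBellInequality}. Recall that to a two-qubit state $\rho$ one associates the $3\times 3$ correlation matrix $T_\rho$ with entries $(T_\rho)_{k\ell}=\mathrm{Tr}\!\big(\rho\,(\sigma_k\otimes\sigma_\ell)\big)$ for $k,\ell\in\{1,2,3\}$, and that the largest value of the CHSH expression attainable with dichotomic spin measurements equals $2\sqrt{m(\rho)}$, where $m(\rho)$ is the sum of the two largest eigenvalues of $T_\rho^{\top}T_\rho$. Since $NL[\rho]$ is, by the definition adopted in Section~\ref{sectionCHSH}, exactly this maximum (the sign normalisation $E_{0,0},E_{0,1},E_{1,0}\ge 0$ is realised by local relabellings that are already among the transformations over which the Horodecki maximum is taken, so nothing is lost), it suffices to compute $T_{\Omega_\alpha}$.

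First I would use linearity of the map $\rho\mapsto T_\rho$ together with the Bell-diagonal form $\Omega_\alpha=\alpha\,|\Phi^{+}\rangle\langle\Phi^{+}|+\tfrac{1-\alpha}{2}\big(|00\rangle\langle00|+|11\rangle\langle11|\big)$. A direct evaluation of the nine expectation values shows that $|\Phi^{+}\rangle\langle\Phi^{+}|$ has correlation matrix $\mathrm{diag}(1,-1,1)$, that $\tfrac12\big(|00\rangle\langle00|+|11\rangle\langle11|\big)$ has correlation matrix $\mathrm{diag}(0,0,1)$, and that all off-diagonal entries vanish in both cases. Hence $T_{\Omega_\alpha}=\mathrm{diag}(\alpha,-\alpha,1)$, and therefore $T_{\Omega_\alpha}^{\top}T_{\Omega_\alpha}=\mathrm{diag}(\alpha^{2},\alpha^{2},1)$. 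Because $0\le\alpha\le1$, the two largest eigenvalues are $1$ and $\alpha^{2}$, so $m(\Omega_\alpha)=1+\alpha^{2}$ and $NL[\Omega_\alpha]=2\sqrt{1+\alpha^{2}}$, as claimed. (As a sanity check, $2\sqrt{1+\alpha^{2}}\le 2\sqrt2$ for all admissible $\alpha$, consistent with Tsirelson's bound, with equality only at $\alpha=1$.)

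The arithmetic here is entirely routine; the one point I would take care to spell out is the identification of the paper's quantity $NL[\cdot]$ with the symmetric CHSH maximum of \cite{violatingBellInequality}, i.e.\ checking that the ``without loss of generality'' sign assumptions of Section~\ref{sectionCHSH} do not shrink the class of admissible measurements and that one is genuinely taking the supremum over all four observables. If one prefers to avoid invoking the Horodecki formula, the same conclusion follows from a two-sided elementary argument: the upper bound $NL[\Omega_\alpha]\le 2\sqrt{1+\alpha^{2}}$ by expanding the CHSH operator $A_0\otimes B_0+A_0\otimes B_1+A_1\otimes B_0-A_1\otimes B_1$ against $T_{\Omega_\alpha}$ and maximising over unit Bloch vectors, and the matching lower bound by exhibiting the standard CHSH measurements lying in the $x$--$z$ plane of the Bloch sphere with the tilt angle tuned so as to exploit the $(\alpha,1)$ anisotropy of $T_{\Omega_\alpha}$.
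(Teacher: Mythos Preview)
Your proof is correct and follows essentially the same route as the paper: both invoke the Horodecki criterion from \cite{violatingBellInequality}, with the paper specialising it to Bell-diagonal states via the weights $p_1,\dots,p_4$ (whose three expressions are precisely the diagonal entries of $T^{\top}T$), while you compute $T_{\Omega_\alpha}=\mathrm{diag}(\alpha,-\alpha,1)$ directly and read off $m(\Omega_\alpha)=1+\alpha^{2}$. The difference is purely cosmetic.
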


\begin{proof}
Assume we have given a Bell diagonal mixed quantum state
\begin{equation}
\rho = p_{1} \left|\Phi^{+}\right\rangle\left\langle \Phi^{+}\right| + p_{2} \left|\Phi^{-}\right\rangle\left\langle \Phi^{-}\right| + p_{3} \left|\Psi^{+}\right\rangle\left\langle \Psi^{+}\right| + p_{4} \left|\Psi^{-}\right\rangle\left\langle \Psi^{-}\right| ~.
\end{equation}
Using the results from \cite{violatingBellInequality} one gets for the maximal CHSH violation of $\rho$ the following value
\begin{equation}
NL[\rho] = 2 \sqrt{\lambda_{1} + \lambda_{2}}
\end{equation}
where $\lambda_{1}$ and $\lambda_{2}$ are the two larger values of the following three expressions
\begin{equation}
\label{eqnList}
(2 \cdot(p_{1} + p_{3}) - 1)^{2} ~,~ (2 \cdot(p_{2} + p_{3}) - 1)^{2} ~,~ (2 \cdot(p_{1} + p_{2}) - 1)^{2} ~.
\end{equation}
As for $\Omega_{\alpha}$ we have $p_{1} = (1+\alpha)/2$, $p_{2} = (1-\alpha)/2$ and $p_{3} = p_{4} = 0$, the two larger values of equation (\ref{eqnList}) for $\lambda_{1}$ and $\lambda_{2}$ are
\begin{eqnarray}
\lambda_{1} &=& \alpha^{2} ~, \nonumber\\
\lambda_{2} &=& 1 ~. \nonumber
\end{eqnarray} 
Hence, the maximal CHSH violation for $\Omega_{\alpha}$ becomes $NL[\Omega_{\alpha}] = 2 \sqrt{1 + \alpha^{2}}$.
\end{proof}

\begin{lemma}
An isotropic $q$-PR-box can be simulated by  $\Omega_{\alpha}$, using appropriate measurements, if
\[
%\label{rescalingCHSH}
  q \leq \frac{1}{2} + \frac{1}{4} \sqrt{1 + \alpha^{2}} ~.
\]
\end{lemma}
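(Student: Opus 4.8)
The plan is to \emph{construct} the simulating measurements explicitly. First I would record the correlation matrix of $\Omega_\alpha$ in the Pauli basis: from $T_{\Phi^+}=\mathrm{diag}(1,-1,1)$ and $T_{\Phi^-}=\mathrm{diag}(-1,1,1)$ one gets $T_{\Omega_\alpha}=\mathrm{diag}(\alpha,-\alpha,1)$, and since $\mathrm{Tr}_B\,\Omega_\alpha=\mathrm{Tr}_A\,\Omega_\alpha=I/2$, every projective qubit measurement on either side produces an \emph{unbiased} $\pm1$ outcome --- exactly the marginal condition required of an isotropic $q$-PR-box. Restricting Alice and Bob to spin measurements in the $xz$-plane, the correlators take the form $E(\theta,\phi)=\alpha\sin\theta\sin\phi+\cos\theta\cos\phi=\hat a^{T}\mathrm{diag}(\alpha,1)\,\hat b$, and the standard CHSH optimization suggests that Alice measure $\sigma_z$ and $\sigma_x$ while Bob measures along directions at angles $\pm\eta$ with $\tan\eta=\alpha$.

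The subtlety --- and the reason the lemma does not follow immediately from Proposition~\ref{weakEntangledCHSH} --- is that these CHSH-optimal measurements do \emph{not} yield an isotropic box once $\alpha<1$: a direct computation gives the correlator pattern $(E_{00},E_{01},E_{10},E_{11})=(e_1,e_1,e_2,-e_2)$ with $e_1=1/\sqrt{1+\alpha^2}$ and $e_2=\alpha^2/\sqrt{1+\alpha^2}$, hence CHSH value $2e_1+2e_2=2\sqrt{1+\alpha^2}$ but unequal magnitudes unless $\alpha=1$. I would repair this by symmetrizing with a CHSH-value-preserving \emph{local relabeling}: let Alice flip her input bit and let Bob replace his output bit $b$ by $b\oplus y$. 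A short sign bookkeeping shows this sends the box to the pattern $(e_2,e_2,e_1,-e_1)$ --- again CHSH value $2\sqrt{1+\alpha^2}$, still unbiased marginals. Running the two strategies with probability $\tfrac12$ each, coordinated by a shared random bit, produces (by linearity of the correlators under mixing) a box with all four $|E_{xy}|$ equal to $\tfrac12(e_1+e_2)=\tfrac12\sqrt{1+\alpha^2}$; since a box with unbiased marginals is determined by its correlators, this is precisely the isotropic $q$-PR-box with $q=\tfrac12+\tfrac14\sqrt{1+\alpha^2}$.

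It then remains to reach every smaller value of $q$: for $q\in[\tfrac12,\tfrac12+\tfrac14\sqrt{1+\alpha^2}]$ I would mix the box just built with the trivial isotropic box (all $E_{xy}=0$, i.e.\ the $\tfrac12$-PR-box from independent local coin flips), weighting the former by $\lambda=(q-\tfrac12)/(\tfrac14\sqrt{1+\alpha^2})$, which gives exactly the isotropic $q$-PR-box; the case $q<\tfrac12$ is vacuous since isotropic $q$-PR-boxes have $q\ge\tfrac12$ by definition. Shared randomness is legitimately available (indeed measuring a single copy of $\Omega_\alpha$ in the computational basis already hands Alice and Bob a common uniform bit), so the whole construction stays within ``local measurements plus shared randomness.'' The only points demanding care in the full write-up are the sign bookkeeping in the relabeling step and the check that mixing measurement strategies adds correlators linearly while preserving unbiasedness; both are routine, so I expect the genuine obstacle to be purely the conceptual one above --- noticing that a fair mixture of the optimal box with one CHSH-equivalent relabeling of it is already isotropic.
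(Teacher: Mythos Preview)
Your proposal is correct and follows essentially the same route as the paper: the same optimal $xz$-plane measurements (Alice along $\sigma_z,\sigma_x$; Bob at angles $\pm\eta$ with $\tan\eta=\alpha$), the same observation that the resulting correlator pattern $(e_1,e_1,e_2,-e_2)$ is CHSH-optimal but non-isotropic, and the \emph{same} symmetrizing relabeling (Alice flips her input, Bob XORs his output with his input) mixed in with probability $\tfrac12$ via shared randomness. Your extra step of mixing with the trivial $\tfrac12$-box to reach every $q$ below the maximum is in fact more complete than the paper's write-up, which only exhibits the extremal $q$; and your remark that a shared random bit can itself be extracted from one copy of $\Omega_\alpha$ mirrors the paper's footnote on the same point.
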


\begin{proof}
The  state $\Omega_{\alpha}$ has  non-locality of $NL[\Omega_{\alpha}] = 2 \sqrt{1 + \alpha^{2}}$ when  appropriate measurement operators are chosen by Alice and Bob (see Proposition \ref{weakEntangledCHSH}). We will see that it suffices to consider measurements in the real plane to achieve this non-locality. Therefore, such measurements can be described by observables of the form 
\begin{equation}
 M(\vec{v}) = \langle \vec{v} , \vec{\sigma} \rangle\ ,
\end{equation}
where $\vec{v} = (\cos(\phi), \sin(\phi))^T \in \mathbb{R}^2$ is a real unit vector (the measurement direction) and $\vec{\sigma} = (\sigma_X, \sigma_Z)^T$, where $\sigma_X$ and $\sigma_Z$ are Pauli matrices.
Let us define the following observables for Alice: 
\begin{equation}
\label{measureMaxA}
M(\vec{a}_0) = \sigma_Z  ~,~ M(\vec{a}_1) = \sigma_X 
\end{equation}
and for Bob,
\begin{equation}
\label{measureMaxB}
M(\vec{b}_0) = \frac{\alpha}{\sqrt{1 + \alpha^2}} \sigma_X + \frac{1}{\sqrt{1 + \alpha^2}} \sigma_Z ~,~ M(\vec{b}_1) =  \frac{-\alpha}{\sqrt{1 + \alpha^2}} \sigma_X + \frac{1}{\sqrt{1 + \alpha^2}} \sigma_Z ~.
\end{equation}
As these four Hermitian measurement operators have the eigenvalues $+1$ and $-1$, each of them can be written as the difference of two projectors, namely $M(\vec{a}_x) = P_A^{x,0} - P_A^{x,1}$ and $M(\vec{b}_y) = P_B^{y,0} - P_B^{y,1}$. Hence, the measurement statistic of the joint system of Alice and Bob becomes 
\begin{equation}
\label{probMeasu}
 P(a,b \vert x,y) = \Tr \left( (P_A^{x,a} \otimes P_B^{y,b}) \cdot \Omega_{\alpha} \right) ~~ \forall x,y,a,b \in \{0,1\}~.
\end{equation}
To compute the non-locality of $P(a,b \vert x,y)$ we need to determine the correlation functions (see equation (\ref{corrFunc})):
\begin{equation}
  E_{x,y} = P(0,0 \vert x,y) + P(1,1 \vert x,y) - P(0,1 \vert x,y) - P(1,0 \vert x,y) ~,~  x,y \in \{0,1\} ~.
\end{equation}
By using  (\ref{probMeasu}) and the linearity of the trace one gets
\begin{equation}
 E_{x,y} = \Tr \left( (M(\vec{a}_x) \otimes M(\vec{b}_y)) \cdot \Omega_{\alpha} \right) ~.
\end{equation}
By some easy calculations, one gets for the four correlation functions the following expressions
\begin{equation}
E_{0,0} = \frac{1}{\sqrt{1 + \alpha^2}} ~,~ E_{0,1} = \frac{1}{\sqrt{1 + \alpha^2}}  ~,~ E_{1,0} = \frac{\alpha^2}{\sqrt{1 + \alpha^2}}  ~,~ E_{1,1} = - \frac{\alpha^2}{\sqrt{1 + \alpha^2}} ~.
\end{equation}
This gives for the non-locality of $P(a,b \vert x,y)$ (see equation (\ref{CHSHeqn})), which corresponds to measuring the state $\Omega_\alpha$ by the measurement operators $M(\vec{a}_x)$ and $M(\vec{b}_y)$, for $x,y \in \{0,1\}$, a value of
\begin{equation}
 NL[P(a,b \vert x,y)] \equiv NL[\Omega_\alpha] = E_{0,0} + E_{0,1} + E_{1,0} - E_{1,1} = 2 \sqrt{1 + \alpha^2} ~.
\end{equation}
Lemma \ref{weakEntangledCHSH} tells us that this is the maximal possible value we can get for the non-locality of $\Omega_\alpha$. Hence, the measurements we defined in equations (\ref{measureMaxA}) and (\ref{measureMaxB}) extract as much non-locality out of $\Omega_\alpha$ as possible.

Using  $q = 1/2 + 1/8 \cdot NL[\Omega_{\alpha}]$, which is just a rescaling of CHSH-type non-locality from the interval $[2,2\sqrt{2}]$ to $[\frac{3}{4},\frac{2 + \sqrt{2}}{4}]$, we get the upper bound given in the lemma. But the joint input/output behavior $P(a,b \vert x,y)$ of Alice and Bob does not yet describe an \emph{isotropic} PR-box. 

Two requirements have to be fulfilled by $P(a,b \vert x,y)$ such that it corresponds to an isotropic PR-box. Namely, the local output by Alice and Bob must be unbiased, i.e., $P_A(a \vert x) = P_B(b \vert y) = 1/2$ for all $x, y, a, b \in \{0,1\}$, and the correlation functions must satisfy $E_{0,0} = E_{0,1} = E_{1,0} = - E_{1,1}$ (see Section \ref{sectionCHSH}). Computing the local input/output behavior of Alice (and similarly for Bob) yields
\begin{equation}
 P_A(a \vert x) = \Tr \left( P^{x,y}_A \cdot \Tr_B(\Omega_\alpha) \right) = \Tr \left( P^{x,a}_A \cdot I/2 \right) = \frac{1}{2} \Tr(P^{x,a}_A) = \frac{1}{2} ~~ \forall x,a \in \{0,1\}
\end{equation}
where $\Tr_B$ means taking the partial trace over Bob's system and $I$ is the identity matrix of dimension two. Therefore, the input/output behavior of Alice (Bob) is unbiased. In order to get the correct form for the correlation functions, Alice and Bob apply together\footnote{hence, they need one bit of shared randomness for this procedure} the following two operations, each with probability $1/2$:
\begin{enumerate}
 \item Alice and Bob do nothing, or
 \item Alice flips her input bit $x$ and Bob flips his output bit $b$ if his input bit $y$ is equal 1.
\end{enumerate}
It is not hard to show that Alice and Bob induce the following transformations on the four correlation functions by applying step 2.
\begin{equation}
 E_{0,0} \rightarrow E_{1,0} ~,~ E_{0,1} \rightarrow - E_{1,1} ~,~ E_{1,0} \rightarrow E_{0,0} ~,~ E_{1,1} \rightarrow - E_{0,1} ~.
\end{equation}
The resulting correlation functions are then (note that the local input/output behavior has not been changed by this process)
\begin{equation}
E_{0,0} = \frac{\sqrt{1 + \alpha^2}}{2} ~,~ E_{0,1} = \frac{\sqrt{1 + \alpha^2}}{2}  ~,~ E_{1,0} = \frac{\sqrt{1 + \alpha^2}}{2}  ~,~ E_{1,1} = - \frac{\sqrt{1 + \alpha^2}}{2} ~.
\end{equation}
As $E_{0,0} + E_{0,1} + E_{1,0} - E_{1,1}$ is still $2 \sqrt{1 + \alpha^2}$, we did not lose any non-locality by this process. Furthermore, we now have an isotropic PR-box, as desired.
\end{proof}

\renewcommand{\theequation}{C-\arabic{equation}}
% redefine the command that creates the equation no.
\setcounter{equation}{0}  % reset counter 

\section{Proof of Lemma \ref{weakEntangledMaximallCHSH}}

\begin{proposition}
\label{weakEntangledFidelity}
The fully entangled fraction of  $\Omega_{\alpha}$ is 
\[
F(\Omega_{\alpha}) = \frac{1 + \alpha}{2} ~.
\]
\end{proposition}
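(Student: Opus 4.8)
The plan is to establish the two inequalities $F(\Omega_\alpha)\geq\tfrac{1+\alpha}{2}$ and $F(\Omega_\alpha)\leq\tfrac{1+\alpha}{2}$ separately, where by definition~(\ref{fidelityPure}) we have $F(\Omega_\alpha)=\max_{U_A,U_B}\langle\Phi^+|(U_A\otimes U_B)\,\Omega_\alpha\,(U_A\otimes U_B)^\dagger|\Phi^+\rangle$.

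For the lower bound I would simply take $U_A=U_B=I$. Using the Bell-diagonal form $\Omega_\alpha=\frac{1+\alpha}{2}|\Phi^+\rangle\langle\Phi^+|+\frac{1-\alpha}{2}|\Phi^-\rangle\langle\Phi^-|$ from Definition~\ref{tauDef} together with $\langle\Phi^+|\Phi^-\rangle=0$, one reads off $\langle\Phi^+|\Omega_\alpha|\Phi^+\rangle=\frac{1+\alpha}{2}$ immediately, so the maximum over local unitaries is at least this value.

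For the upper bound the key observation is that conjugation by a product unitary $U_A\otimes U_B$ is in particular a unitary conjugation, hence leaves the spectrum of $\Omega_\alpha$ invariant. Reading the eigenvalues off the Bell-diagonal form, they are $\tfrac{1+\alpha}{2}$, $\tfrac{1-\alpha}{2}$, $0$, $0$; since $\alpha\in[0,1]$ the largest one is $\tfrac{1+\alpha}{2}$. By the variational (Rayleigh-quotient) characterization of the top eigenvalue, $\langle\phi|\sigma|\phi\rangle\leq\lambda_{\max}(\sigma)$ for every density operator $\sigma$ and every unit vector $|\phi\rangle$; applying this with $\sigma=(U_A\otimes U_B)\,\Omega_\alpha\,(U_A\otimes U_B)^\dagger$ and $|\phi\rangle=|\Phi^+\rangle$ bounds the expression by $\tfrac{1+\alpha}{2}$ uniformly in $U_A,U_B$. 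Taking the maximum gives $F(\Omega_\alpha)\leq\tfrac{1+\alpha}{2}$, and combining with the lower bound yields the claimed equality.

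I do not expect a real obstacle here; the only points needing (minor) care are that a \emph{local} unitary is still a unitary, so the spectral argument applies verbatim, and that the hypothesis $\alpha\geq 0$ is exactly what makes $\tfrac{1+\alpha}{2}$ (rather than $\tfrac{1-\alpha}{2}$) the dominant eigenvalue. A direct but strictly more tedious alternative would be to parametrize $U_A,U_B\in SU(2)$ explicitly and maximize the resulting trigonometric expression; I would avoid that in favour of the spectral argument. Note also that this proposition is precisely what certifies tightness in Theorem~\ref{nonDistabilittyTheorem}: the trivial distillation protocol that keeps one copy of $\Omega_\alpha$ and discards the rest attains fully entangled fraction $\tfrac{1+\alpha}{2}$.
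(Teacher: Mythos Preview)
Your proof is correct but follows a different route from the paper's. The paper parametrizes $(U_A\otimes U_B)\,|\Phi^+\rangle=a_{00}|00\rangle+a_{01}|01\rangle+a_{10}|10\rangle+a_{11}|11\rangle$, computes $\langle\Psi|\Omega_\alpha|\Psi\rangle=\tfrac12\bigl(|a_{00}|^2+|a_{11}|^2+\alpha(a_{00}^*a_{11}+a_{11}^*a_{00})\bigr)$ directly from the $|00\rangle,|11\rangle$ form of $\Omega_\alpha$, and then optimizes this expression over the amplitudes. Your argument instead reads off the spectrum from the Bell-diagonal form and invokes the Rayleigh-quotient bound $\langle\phi|\sigma|\phi\rangle\le\lambda_{\max}(\sigma)$, which is cleaner and in fact proves the slightly stronger statement that even maximizing over \emph{global} unitaries cannot beat $\tfrac{1+\alpha}{2}$. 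The paper's approach, by contrast, keeps the computation entirely concrete and self-contained, at the cost of a short optimization step; your approach trades that step for a standard linear-algebra fact. Both correctly use $\alpha\ge 0$ (implicitly in the paper's AM--GM-type maximization, explicitly in your identification of the top eigenvalue).
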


\begin{proof}
In order to find $F(\Omega_{\alpha})$, we have to find unitary operations $U_{A}$ and $U_{B}$ which maximize $\left\langle \Phi^{+}\right|(U_{A} \otimes U_{B})^{\dagger} \Omega_{\alpha} (U_{A} \otimes U_{B}) \left|\Phi^{+}\right\rangle$. Let us define the joint unitary operation by
\begin{equation}
(U_{A} \otimes U_{B})\left|\Phi^{+}\right\rangle \equiv \left|\Psi\right\rangle= a_{00}\left|00\right\rangle + a_{01}\left|01\right\rangle + a_{10}\left|10\right\rangle + a_{11}\left|11\right\rangle
\end{equation} 
where $a_{00}$, $a_{01}$, $a_{10}$ and $a_{11}$ are complex numbers. Computing $\left\langle \Psi\right|\Omega_{\alpha}\left|\Psi\right\rangle$ yields the following equation after some easy calculations
\begin{equation}
\label{eqnMaxFid}
\left\langle \Psi\right|\Omega_{\alpha}\left|\Psi\right\rangle = \frac{1}{2} \cdot \left(|a_{00}|^{2} + |a_{11}|^{2} + \alpha \cdot (a_{00}^{*} \cdot a_{11} + a_{11}^{*} \cdot a_{00})\right) ~.
\end{equation} 
What is the maximal value this equation can attain? As $a_{01}$ and $a_{10}$ do not appear in the equation, they can be set to $0$. Therefore it must hold that $|a_{00}|^{2} + |a_{11}|^{2} = 1$ because of the unit length of quantum state vectors. 
As $|a_{00}|^{2} + |a_{11}|^{2} = 1$, it is easy to see that equation (\ref{eqnMaxFid}) attains its maximal value when $a_{00} = a_{11} = \frac{1}{\sqrt{2}}$. Hence, this equation becomes maximal when $\left|\Psi\right\rangle = \left|\Phi^{+}\right\rangle$ and therefore the fully entangled fraction of $\Omega_{\alpha}$ becomes
\begin{equation}
F(\Omega_{\alpha}) = \left\langle \Phi^{+}\right| \Omega_{\alpha} \left|\Phi^{+}\right\rangle = \frac{1+\alpha}{2} ~,
\end{equation}  
which finishes the proof.
\end{proof}

\begin{proposition}
\label{proMaxNonlocal}
Let $\rho$ be an arbitrary two qubit quantum state with $\frac{1}{2} \leq F(\rho) \leq 1$. Then the non-locality of $\rho$ is upper bounded by
\[
 NL[\rho] \leq 2 \sqrt{1 + (2 F(\rho) - 1)^2} ~.
\]
\end{proposition}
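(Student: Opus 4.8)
The plan is to translate the statement into the language of the $3\times3$ \emph{correlation matrix} $T=(t_{kl})$ of $\rho$, with $t_{kl}=\Tr(\rho\,\sigma_k\otimes\sigma_l)$ for $k,l\in\{x,y,z\}$, and then to play a lower bound on $F(\rho)$ (coming from the overlap of $\rho$ with a suitable Bell state) against an upper bound forced by positivity of $\rho$ (non-negativity of its overlap with the singlet). First I would recall the criterion of Horodecki--Horodecki--Horodecki \cite{violatingBellInequality}: if $t_1\geq t_2\geq t_3\geq0$ denote the singular values of $T$, then $NL[\rho]=2\sqrt{t_1^{2}+t_2^{2}}$, and moreover $t_1=\|T\|\leq1$ because every observable $(\hat a\cdot\vec\sigma)\otimes(\hat b\cdot\vec\sigma)$ has eigenvalues $\pm1$. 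Since $NL[\cdot]$ and $F(\cdot)$ are invariant under local unitaries, which act on $T$ by $T\mapsto O_AT O_B^{T}$ with $O_A,O_B\in SO(3)$, I may assume from now on that $T=\mathrm{diag}(t_1,t_2,\eta t_3)$ with $\eta=\operatorname{sgn}(\det T)\in\{+1,-1\}$.

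In this normal form, using $|\Psi^{+}\rangle\langle\Psi^{+}|=\tfrac14(I+\sigma_x\otimes\sigma_x+\sigma_y\otimes\sigma_y-\sigma_z\otimes\sigma_z)$ and $|\Psi^{-}\rangle\langle\Psi^{-}|=\tfrac14(I-\sigma_x\otimes\sigma_x-\sigma_y\otimes\sigma_y-\sigma_z\otimes\sigma_z)$, together with the fact that the local Bloch-vector terms of $\rho$ contribute nothing to these traces, I would compute $\langle\Psi^{+}|\rho|\Psi^{+}\rangle=\tfrac14(1+t_1+t_2-\eta t_3)$ and $\langle\Psi^{-}|\rho|\Psi^{-}\rangle=\tfrac14(1-t_1-t_2-\eta t_3)$. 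Since $|\Psi^{+}\rangle$ is maximally entangled, $F(\rho)\geq\langle\Psi^{+}|\rho|\Psi^{+}\rangle$; and since $\rho\geq0$, $\langle\Psi^{-}|\rho|\Psi^{-}\rangle\geq0$. Together with $t_1,t_2\leq1$, these are the only ingredients needed.

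Now I would split according to whether $t_1+t_2\leq1$. If so, then $t_1^{2}+t_2^{2}\leq(t_1+t_2)^{2}\leq1\leq1+(2F(\rho)-1)^{2}$, hence $NL[\rho]\leq2\leq2\sqrt{1+(2F(\rho)-1)^{2}}$. If instead $t_1+t_2>1$, then $\langle\Psi^{-}|\rho|\Psi^{-}\rangle\geq0$ forces $\eta t_3\leq1-t_1-t_2<0$, so (as $t_3\geq0$) necessarily $\eta=-1$ and $t_3\geq t_1+t_2-1>0$. Consequently $F(\rho)\geq\tfrac14(1+t_1+t_2+t_3)$, so $2F(\rho)-1\geq\tfrac12(t_1+t_2+t_3-1)\geq t_1+t_2-1>0$; squaring these non-negative quantities gives $1+(2F(\rho)-1)^{2}\geq1+(t_1+t_2-1)^{2}=t_1^{2}+t_2^{2}+2(1-t_1)(1-t_2)\geq t_1^{2}+t_2^{2}$ by $t_1,t_2\leq1$, so again $NL[\rho]=2\sqrt{t_1^{2}+t_2^{2}}\leq2\sqrt{1+(2F(\rho)-1)^{2}}$.

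The routine parts are the Pauli-basis bookkeeping that yields the two Bell-overlap identities and the closing quadratic estimate. The one step that needs care is the case $t_1+t_2>1$: the key observation is that positivity of $\rho$ by itself already pins down $\eta=\operatorname{sgn}(\det T)=-1$ and simultaneously forces $t_3\geq t_1+t_2-1$, which is precisely what makes the Bell-overlap lower bound on $F(\rho)$ strong enough to dominate $t_1^{2}+t_2^{2}$. As a consistency check, all these inequalities become equalities for $\rho=\Omega_\alpha$, where $T=\mathrm{diag}(\alpha,-\alpha,1)$, $NL[\Omega_\alpha]=2\sqrt{1+\alpha^{2}}$ and $F(\Omega_\alpha)=(1+\alpha)/2$, so the bound is tight.
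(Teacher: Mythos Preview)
Your argument is correct and follows essentially the same route as the paper: bring the correlation matrix $T$ to diagonal normal form via local unitaries, invoke the Horodecki formula $NL[\rho]=2\sqrt{t_1^2+t_2^2}$, use positivity against a Bell state to constrain the third diagonal entry, and finish with the identity $1+(t_1+t_2-1)^2=t_1^2+t_2^2+2(1-t_1)(1-t_2)$. The paper carries out exactly this chain (with $\Psi^{+}$ in place of your $\Psi^{-}$, and the opposite sign convention on the diagonal entries).

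The one genuine streamlining in your version is that you never need the exact formula $F(\rho_d)=\tfrac14(1+|T_{11}|+|T_{22}|+|T_{33}|)$ from \cite{localEnvironmentFidelity}, nor the cited implication $F(\rho)>\tfrac12\Rightarrow\det T<0$: the single lower bound $F(\rho)\geq\langle\Psi^{+}|\rho|\Psi^{+}\rangle$ together with $\langle\Psi^{-}|\rho|\Psi^{-}\rangle\geq0$ already forces $\eta=-1$ and $t_3\geq t_1+t_2-1$ directly. This makes your proof self-contained, whereas the paper's proof imports those two facts from the literature; otherwise the arguments coincide.
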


\begin{proof}
We will actually prove a stronger result, namely that $\Omega_\alpha$ is maximally non-local for given fully entangled fraction, i.e., $F(\Omega_{\alpha}) \geq F(\rho) \Rightarrow NL[\Omega_{\alpha}] \geq NL[\rho]$. If we can prove this result, the proposition follows, because $F(\Omega_{\alpha}) = \frac{1 + \alpha}{2}$ and $NL[\Omega_\alpha] = 2 \sqrt{1 + \alpha^2}$, due to Proposition \ref{weakEntangledCHSH} and \ref{weakEntangledFidelity}, respectively.

An arbitrary two qubit quantum state can be written as 
\begin{equation}
\rho = \frac{1}{4} \left( I \otimes I + \vec{r} \cdot \vec{\sigma} \otimes I + I \otimes \vec{s} \cdot \vec{\sigma} + \sum_{i,j = 1}^{3} T_{ij} \sigma_{i} \otimes \sigma_{j} \right)
\end{equation}
where $I$ stands for the identity operator, $\vec{r}, \vec{s} \in \mathbb{R}^{3}$ and $\vec{r} \cdot \vec{\sigma} = \sum_{i=1}^{3} r_{i} \sigma_{i}$ with $\{\sigma_{i}\}_{i=1}^{3}$ being the Pauli matrices and $T_{ij} \in [-1,1]$.

In the following we are mainly interested in the real $3\times 3 $ - matrix $T$ which is defined as $T_{ij} = \Tr(\rho ( \sigma_{i} \otimes \sigma_{j}))$, because the fully entangled fraction and the non-locality of $\rho$ depend only on the coefficients $T_{ij}$ as shown by the Horodecki family and Badzi\c{a}g in \cite{violatingBellInequality} and \cite{localEnvironmentFidelity}. As shown in \cite{localEnvironmentFidelity} one can find local unitary operations $U_{A}$ and $U_{B}$ such that $\rho_{d} = (U_{A} \otimes U_{B}) \rho (U_{A} \otimes U_{A})^{\dagger}$ has a diagonal matrix $T$, i.e., $T = diag(T_{11},T_{22},T_{33})$.

The non-locality of $\rho_{d}$, measured by the maximal violation of the CHSH inequality, is given by \cite{violatingBellInequality}
\begin{equation}
\label{nonlocalityEquation}
NL[\rho_{d}] = 2 \sqrt{T_{11}^{2} + T_{22}^{2}}
\end{equation}
where we assumed without loss of generality that $T_{11}$ and $T_{22}$ have the larger absolute value than $T_{33}$. In order for $\rho_d$ to be non-local, we need $T_{11}^{2} + T_{22}^{2} > 1$, which implies $\left|T_{11}\right| + \left|T_{22}\right| > 1$. By using the following two implications from \cite{localEnvironmentFidelity},
\begin{eqnarray}
\left|T_{11}\right| +  \left|T_{22}\right| +  \left|T_{33}\right| > 1 &\Rightarrow& F(\rho_d) > \frac{1}{2}\ , \\
F(\rho_d) > \frac{1}{2} &\Rightarrow& \det(T) < 0\ ,
\end{eqnarray}
we can conclude that $\det(T) = T_{11} \cdot T_{22} \cdot T_{33} < 0$ if $\rho_d$ is non-local.
Hence, because we are only interested in non-local states we must have $\det(T) < 0$. Furthermore, the unitary operations $U_{A}$ and $U_{B}$ can be chosen in such a way that $T_{11}, T_{22} \leq 0$. Together with $\det(T) < 0$ we can conclude that $T_{11},T_{22},T_{33} \leq 0$.

As we assume that $T$ is diagonal, the singular values of $T$ are simply $\left|T_{11}\right|,\left|T_{22}\right|$, and $\left|T_{33}\right|$. The fully entangled fraction of $\rho_{d}$ is then given by \cite{localEnvironmentFidelity}
\begin{equation}
\label{fidForm}
F(\rho_{d}) = \frac{1}{4} \left(1 + \left|T_{11}\right| +  \left|T_{22}\right| +  \left|T_{33}\right| \right)\ ,
\end{equation}
where we assumed that $\rho_{d}$ is non-local, i.e., $\det(T) < 0$. 

It is easy to see that $NL[\rho] = NL[\rho_{d}]$ because the \emph{local} unitary operations do not change the non-locality of a given state. Additionally, it is also true that $F(\rho) = F(\rho_{d})$ because the fully entangled fraction of $\rho$ is not changed by local unitary operations. Hence, if we prove that $F(\Omega_{\alpha}) \geq F(\rho_{d})$ implies $NL[\Omega_{\alpha}] \geq NL[\rho_{d}]$, the result itself is proved. Furthermore, using Propositions \ref{weakEntangledCHSH} and \ref{weakEntangledFidelity} and equations (\ref{nonlocalityEquation}) and (\ref{fidForm}), it is enough to prove 
\begin{equation}
\label{easierTheorem}
\frac{1+\alpha}{2} \geq \frac{1}{4} \left(1 + \left|T_{11}\right| +  \left|T_{22}\right| +  \left|T_{33}\right| \right) \Rightarrow 2 \sqrt{1 + \alpha^{2}} \geq 2 \sqrt{T_{11}^{2} + T_{22}^{2}} ~.
\end{equation}

In order to be able to prove equation (\ref{easierTheorem}), we first need to derive restrictions on the coefficients of the matrix $T$, because not all matrices $T$ with $T_{ij} \in [-1,1]$ correspond to a valid quantum state, i.e., a density matrix which is a \emph{positive} operator with trace $1$. An operator $\rho_{d}$ is positive if and only if
\begin{equation}
\left\langle \varphi \right| \rho_{d} \left| \varphi \right\rangle \geq 0  ~~ \forall \left|\varphi\right\rangle ~.
\end{equation}
Therefore, we must also have
\begin{equation}
\label{positivPsi}
  \left\langle \Psi^{+} \right| \rho_{d} \left| \Psi^{+} \right\rangle \geq 0 ~.
\end{equation}
Simple calculations show that  (\ref{positivPsi}) implies 
\begin{equation}
\label{T33equation}
T_{33} \leq 1 + T_{11} + T_{22} < 0\ ,
\end{equation}
where the second inequality is an implication of  $T_{11} + T_{22} < -1$ (which follows from $\left|T_{11}\right| + \left|T_{22}\right| > 1$ and $T_{11}, T_{22} \leq 0$).

Solving the left-hand side of equation (\ref{easierTheorem}) for $\alpha$, and using  $T_{11},T_{22},T_{33} \leq 0$, we get
\begin{equation}
\label{posEq}
 \alpha \geq - \frac{1}{2} (1 + T_{11} + T_{22} + T_{33}) ~.
\end{equation}
It is easy to see that $1 + T_{11} + T_{22} + T_{33} < 0$ if we demand $\rho_{d}$ to be non-local and, hence, $\alpha > 0$. Hence, we can conclude that
\begin{equation}
\label{eqnAlpha1}
 1 + \alpha^2 \geq 1 + \frac{1}{4} (1 + T_{11} + T_{22} + T_{33})^2 ~.
\end{equation}
By putting equation (\ref{T33equation}) into  (\ref{eqnAlpha1}), we get
\begin{eqnarray}
\label{eqnAlpha2}
1 + \alpha^2 &\geq& 1 +  (1 + T_{11} + T_{22})^2  \nonumber \\
&=& 2 + 2 T_{11} + 2 T_{22} + 2 T_{11} T_{22} + T_{11}^2 + T_{22}^2 \nonumber \\
&=& 2 \left( (1 + T_{11})(1 + T_{22}) \right) + T_{11}^2 + T_{22}^2 \nonumber \\
&\geq& T_{11}^2 + T_{22}^2 ~.
\end{eqnarray}
This holds because of $1 + T_{11} + T_{22} + T_{33} < 0$ in equation (\ref{eqnAlpha1}) and $1 + T_{11} + T_{22} < 0$ in  (\ref{T33equation}) and (\ref{eqnAlpha2}). The last inequality comes from the fact that $T_{11} \geq -1$ and $T_{22} \geq -1$. Hence, we have proved that $1 + \alpha^2 \geq  T_{11}^2 + T_{22}^2$ and therefore equation (\ref{easierTheorem}).
\end{proof}

Figure \ref{maximalCHSHplot} shows the correspondence between the fully entangled fraction and non-locality of $\Omega_\alpha$ states. Note that pure states have exactly the same relation.
\begin{figure}%%[t]
 \centering
 \input{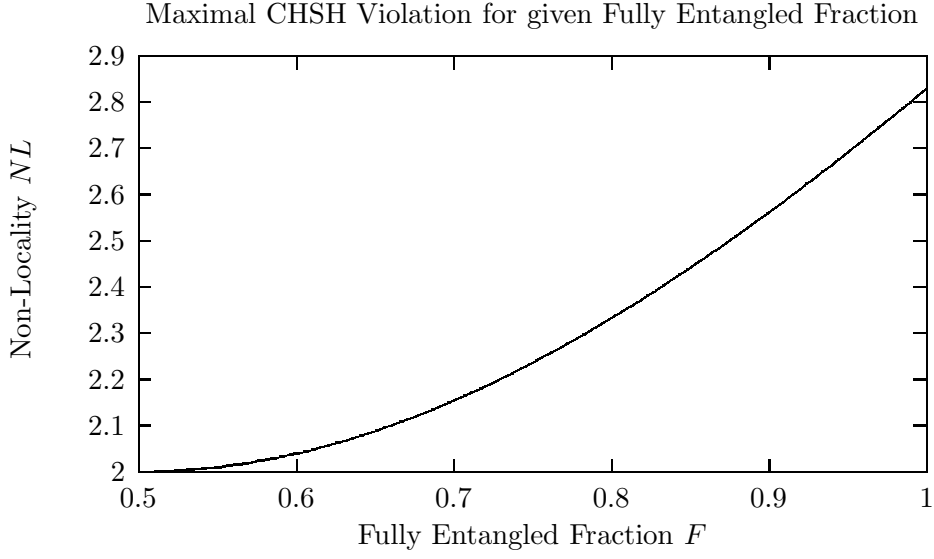}
 \caption{The maximal CHSH violation for given fully entangled fraction is plotted for pure states and weakly entangled mixed states  $\Omega_{\alpha}$ (see Definition \ref{tauDef}). All possible quantum states are located below the curve corresponding to pure states and $\Omega_{\alpha}$ states.}
 \label{maximalCHSHplot}
\end{figure}

\begin{proposition}
\label{propLocalUpper}
Let $\rho$ be an arbitrary two qubit quantum state with $\frac{1}{4} \leq F(\rho) \leq \frac{1}{2}$. Then the non-locality of $\rho$ is upper bounded by
\[
 NL[\rho] \leq 4 F(\rho) ~.
\]
\end{proposition}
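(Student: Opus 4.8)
The plan is to run the same analysis as in the proof of Proposition~\ref{proMaxNonlocal}, but now extracting a \emph{lower} bound on $F$ rather than the upper bound used there. I would write the arbitrary two-qubit state as $\rho=\tfrac14\bigl(I\otimes I+\vec r\cdot\vec\sigma\otimes I+I\otimes\vec s\cdot\vec\sigma+\sum_{ij}T_{ij}\,\sigma_i\otimes\sigma_j\bigr)$ and recall (as in \cite{violatingBellInequality,localEnvironmentFidelity}) that local unitaries act on the correlation matrix $T$ by $T\mapsto O_AT O_B^{\mathsf T}$ with $O_A,O_B\in SO(3)$, while fixing both $NL[\rho]$ and $F(\rho)$. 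Since under this action only the singular values $s_1\ge s_2\ge s_3\ge 0$ of $T$ and the sign $\varepsilon:=\operatorname{sgn}(\det T)\in\{+1,-1\}$ are invariant, I would first pass to the normal form $\rho_d$ with $T=\operatorname{diag}(s_1,s_2,\varepsilon s_3)$. For this state the Horodecki formula gives $NL[\rho_d]=2\sqrt{s_1^{2}+s_2^{2}}$.

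The argument then rests on two one-line observations about $\rho_d$. First, since $|\Psi^{+}\rangle$ is maximally entangled,
\[
F(\rho_d)\ \ge\ \langle\Psi^{+}|\rho_d|\Psi^{+}\rangle\ =\ \tfrac14\bigl(1+T_{11}+T_{22}-T_{33}\bigr)\ =\ \tfrac14\bigl(1+s_1+s_2-\varepsilon s_3\bigr).
\]
Second, positivity of $\rho_d$ forces $\langle\Psi^{-}|\rho_d|\Psi^{-}\rangle\ge 0$, and since the singlet has correlation signature $(-1,-1,-1)$ this reads $\tfrac14(1-s_1-s_2-\varepsilon s_3)\ge 0$, i.e.\ $s_1+s_2+\varepsilon s_3\le 1$. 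These are the exact analogues, for a state with no entanglement to spare, of the fully-entangled-fraction formula and the positivity inequality used in the proof of Proposition~\ref{proMaxNonlocal}.

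Combining the two facts with the elementary bound $\sqrt{s_1^{2}+s_2^{2}}\le s_1+s_2$ yields the chain
\[
NL[\rho]\ =\ 2\sqrt{s_1^{2}+s_2^{2}}\ \le\ 2(s_1+s_2)\ \le\ 1+s_1+s_2-\varepsilon s_3\ \le\ 4F(\rho),
\]
where the middle inequality is merely the positivity bound rewritten as $2(s_1+s_2)\le (s_1+s_2)+(1-\varepsilon s_3)$, and the last inequality is the lower bound on $F(\rho)=F(\rho_d)$. This already proves the proposition; one checks equality is attained, e.g.\ by $\Omega_{0}=\tfrac12|00\rangle\langle00|+\tfrac12|11\rangle\langle11|$, for which $NL=2$ and $F=\tfrac12$, so the estimate is tight at the upper end of the stated range. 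In fact the hypothesis $F(\rho)\le\tfrac12$ is not used in the argument; it merely singles out the regime in which this bound is the sharp one, Proposition~\ref{proMaxNonlocal} being stronger for $F(\rho)\ge\tfrac12$.

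I do not expect a genuine obstacle; the delicate points are purely of bookkeeping. The first is justifying the $SO(3)\times SO(3)$ normal form $T=\operatorname{diag}(s_1,s_2,\varepsilon s_3)$: one cannot negate a single singular value by rotations, so the sign $\varepsilon$ must be carried along, which is the only reason a case distinction enters at all. The second is observing that $\vec r$ and $\vec s$ drop out of $\langle\Psi^{\pm}|\rho_d|\Psi^{\pm}\rangle$, so that the two key facts hold for \emph{every} density matrix with the given $T$ and not just Bell-diagonal ones. Everything else is the displayed chain of inequalities.
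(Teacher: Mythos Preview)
Your argument is correct and follows essentially the same architecture as the paper's proof: pass to the diagonal normal form of the correlation matrix $T$, use the Horodecki formula $NL=2\sqrt{s_1^2+s_2^2}$, and combine a positivity constraint on the $T_{ii}$ with a bound relating $F(\rho)$ to the $T_{ii}$ to obtain $NL\le 2(s_1+s_2)\le 4F(\rho)$.

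The one difference worth recording is that the paper invokes two facts from \cite{localEnvironmentFidelity} valid under the hypothesis $F(\rho)\le\tfrac12$, namely the exact formula $F(\rho)=\tfrac14(1+|T_{11}|+|T_{22}|-|T_{33}|)$ and the implication $|T_{11}|+|T_{22}|+|T_{33}|\le 1$, and then combines them to get $|T_{11}|+|T_{22}|\le 2F(\rho)$. You instead derive the two needed inequalities directly: the lower bound $F(\rho)\ge\langle\Psi^+|\rho_d|\Psi^+\rangle=\tfrac14(1+s_1+s_2-\varepsilon s_3)$ from the definition of $F$, and the constraint $s_1+s_2+\varepsilon s_3\le 1$ from $\langle\Psi^-|\rho_d|\Psi^-\rangle\ge 0$. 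This makes your version self-contained and, as you note, shows that the hypothesis $F(\rho)\le\tfrac12$ is not actually used; the inequality $NL[\rho]\le 4F(\rho)$ holds for every two-qubit state. Your remarks about the $SO(3)$ normal form carrying the sign $\varepsilon=\operatorname{sgn}(\det T)$ on the smallest singular value, and about the $\vec r,\vec s$ terms dropping out of $\langle\Psi^\pm|\rho_d|\Psi^\pm\rangle$, are both correct and are exactly the bookkeeping points that need to be checked.
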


\begin{proof}
We will use the same notation as in the proof of Proposition \ref{proMaxNonlocal}. As $F(\rho)$ is smaller or equal than $1/2$, it can be shown \cite{localEnvironmentFidelity} that $\rho$ must be local and therefore the fully entangled fraction can be written as \cite{localEnvironmentFidelity}
\begin{equation}
F(\rho) = \frac{1}{4}\left(1 + \max_{i \neq j \neq k}(|T_{ii}| + |T_{jj}| - |T_{kk}|)\right) ~,~ i,j,k \in \{1,2,3\}~.
\end{equation}
Without loss of generality we can assume that $T_{11}$ and $T_{22}$ have a larger absolute value than $T_{33}$ and we can therefore write for the fully entangled fraction of $\rho$ the simpler expression
\begin{equation}
F(\rho) = \frac{1}{4}\left(1 + |T_{11}| + |T_{22}| - |T_{33}|\right) ~. 
\end{equation}
Furthermore, as $F(\rho) \leq 1/2$ implies \cite{localEnvironmentFidelity} that $|T_{11}| + |T_{22}| + |T_{33}| \leq 1$, we get
\begin{equation}
|T_{11}| + |T_{22}| \leq 2 F(\rho) ~. 
\end{equation}
As we have assumed that $T_{11}$ and $T_{22}$ have a larger absolute value than $T_{33}$, we get for the non-locality of $\rho$ (see proof of Proposition \ref{proMaxNonlocal}) a value of $NL[\rho] = 2 \sqrt{T_{11}^2 + T_{22}^2}$. Hence, we can upper bound the non-locality of $\rho$ by
\begin{equation}
NL[\rho] = 2 \sqrt{T_{11}^2 + T_{22}^2} \leq 2 (|T_{11}| + |T_{22}|) \leq 4 F(\rho) ~,
\end{equation}
which finishes the proof.
\end{proof}

\begin{lemma}
Let $\Omega_{\alpha}$ be a weakly entangled mixed state and $\sigma^{ij}$ be arbitrary two-qubit quantum states and $\sum_{ij} p_{ij} = 1$ with $p_{ij} \geq 0$, then
\[
 F(\Omega_\alpha) \geq \sum_{ij} p_{ij} \cdot F(\sigma^{ij}) \Rightarrow NL[\Omega_{\alpha}] + c(\alpha) \geq \sum_{ij} p_{ij} \cdot NL[\sigma^{ij}] ~,
\]
where $c(\alpha) = \alpha (2 \sqrt{2} - 2) - 2 \sqrt{1 + \alpha^2} + 2$.
\end{lemma}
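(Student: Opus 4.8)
The plan is to control each $NL[\sigma^{ij}]$ by the corresponding fully entangled fraction $F(\sigma^{ij})$, combine the resulting bounds convexly, and then invoke the hypothesis together with Propositions \ref{weakEntangledCHSH} and \ref{weakEntangledFidelity}. The starting point is the observation that $c(\alpha)$ has been chosen precisely so that
\[
  NL[\Omega_\alpha] + c(\alpha) = 2\sqrt{1+\alpha^2} + \bigl(\alpha(2\sqrt{2}-2) - 2\sqrt{1+\alpha^2} + 2\bigr) = 2 + 2(\sqrt{2}-1)\alpha = L\bigl(\tfrac{1+\alpha}{2}\bigr),
\]
where $L(f) := 2 + 2(\sqrt{2}-1)(2f-1)$ is the affine function whose graph over $[\tfrac12,1]$ is the chord of the curve $f \mapsto 2\sqrt{1+(2f-1)^2}$ joining $(\tfrac12,2)$ to $(1,2\sqrt2)$, and where we used $F(\Omega_\alpha) = (1+\alpha)/2$. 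Since $L$ is increasing and the hypothesis reads $\sum_{ij} p_{ij} F(\sigma^{ij}) \le F(\Omega_\alpha)$, it suffices to prove $\sum_{ij} p_{ij}\, NL[\sigma^{ij}] \le L\bigl(\sum_{ij} p_{ij}\, F(\sigma^{ij})\bigr)$.

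First I would record a pointwise bound $NL[\rho] \le h(F(\rho))$, valid for every two-qubit state $\rho$, with
\[
  h(f) := \begin{cases} 4f, & \tfrac14 \le f \le \tfrac12,\\ 2\sqrt{1+(2f-1)^2}, & \tfrac12 \le f \le 1. \end{cases}
\]
This case split is exhaustive because $F(\rho) \in [\tfrac14,1]$ for any two-qubit $\rho$: the four Bell states form an orthonormal basis, so $\sum_B \langle B|\rho|B\rangle = 1$ over them, whence the largest term is at least $\tfrac14$ and $F(\rho)\ge\tfrac14$. For $F(\rho) \le \tfrac12$ the bound is Proposition \ref{propLocalUpper}, for $F(\rho) \ge \tfrac12$ it is Proposition \ref{proMaxNonlocal}, and the two expressions agree at $f = \tfrac12$. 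Note that in the local regime one genuinely needs the sharp bound $NL[\rho] \le 4F(\rho)$ rather than the trivial $NL[\rho] \le 2$, since $L(\tfrac14) = 3-\sqrt2 < 2$.

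The one substantive step is the elementary inequality $h(f) \le L(f)$ for all $f \in [\tfrac14,1]$. On $[\tfrac12,1]$ this is exactly the statement that the convex function $t \mapsto \sqrt{1+t^2}$ (whose second derivative is $(1+t^2)^{-3/2}>0$) lies below its chord over $t\in[0,1]$, read through $t = 2f-1$. On $[\tfrac14,\tfrac12]$ both $h$ and $L$ are affine, with $h(\tfrac12) = 2 = L(\tfrac12)$ and $h(\tfrac14) = 1 \le 3-\sqrt2 = L(\tfrac14)$, so $h \le L$ on the whole segment. Assembling everything,
\[
  \sum_{ij} p_{ij}\, NL[\sigma^{ij}] \le \sum_{ij} p_{ij}\, h\bigl(F(\sigma^{ij})\bigr) \le \sum_{ij} p_{ij}\, L\bigl(F(\sigma^{ij})\bigr) = L\Bigl(\sum_{ij} p_{ij}\, F(\sigma^{ij})\Bigr) \le L\bigl(F(\Omega_\alpha)\bigr) = NL[\Omega_\alpha] + c(\alpha),
\]
the middle equality being linearity of $L$ together with $\sum_{ij} p_{ij} = 1$, and the last inequality being monotonicity of $L$ applied to the hypothesis. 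The mathematical core is therefore just \emph{a convex curve lies below its chord}; the only places needing care are recognising that $NL[\Omega_\alpha] + c(\alpha)$ is that chord evaluated at $F(\Omega_\alpha)$, noting that $F(\sigma^{ij}) \ge \tfrac14$ automatically so the case analysis is complete, and using the sharp $4F$-bound rather than $NL\le 2$ on the local side.
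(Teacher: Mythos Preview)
Your proof is correct. The paper uses the same pointwise bounds---Propositions \ref{proMaxNonlocal} and \ref{propLocalUpper} giving $NL[\rho]\le h(F(\rho))$---but packages the remaining step as a constrained optimisation: maximise $\sum_k p_k\, h(F_k)$ subject to $\sum_k p_k(2F_k-1)=\alpha$, and then asserts (with a ``one can show that'') that the extremum is attained at the two-point distribution placing weight $\alpha$ at $F=1$ and weight $1-\alpha$ at $F=\tfrac12$, which produces exactly $c(\alpha)$. Your chord argument is the same fact viewed geometrically: bounding $h$ by the affine majorant $L$ is precisely what linearises the objective and makes that two-point extremum transparent. Your route has the advantage of being fully rigorous without the optimisation detour, and it handles cleanly the inequality (rather than equality) in the hypothesis via monotonicity of $L$; the paper's route, by contrast, \emph{derives} the expression for $c(\alpha)$ rather than merely verifying it.
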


\begin{proof}
We can rewrite the statement by using Propositions \ref{weakEntangledCHSH} and \ref{weakEntangledFidelity} as follows: if $\frac{1+\alpha}{2} = \sum_{k} p_{k} \cdot F(\sigma^k)$ then $2 \sqrt{1 + \alpha^2} + c(\alpha) \geq \sum_{k} p_k \cdot NL[\sigma^k]$ with $\sum_{k} p_{k} = 1$ and $p_k \geq 0$. Without loss of generality, we assume that for the first $m$ terms in the sum, with $m \in \{0,1,...,n\}$, it holds that $1/4 \leq F(\sigma^k) < 1/2$, for $k \in \{1,2,...,m\}$, and for the remaining terms $1/2 \leq F(\sigma^k) \leq 1$, for $k \in \{m+1,...,n\}$.

Let us make the following substitutions. If $1/4 \leq F(\sigma^k) < 1/2$ then set $x_k \equiv 2 F(\sigma^k) - 1 \in [-1/2,0)$ and if $1/2 \leq F(\sigma^k) \leq 1$ then set $y_k \equiv 2 F(\sigma^k) - 1 \in [0,1]$. In order to prove the lemma we need to find the minimal value $c(\alpha)$ must have such that the statement is true. By using Propositions \ref{proMaxNonlocal} and \ref{propLocalUpper} we can write down the following constrained optimization problem
\begin{eqnarray}
\max_{\{p_k\}, \{x_k\}, \{y_k\}} c(\alpha) &=& - 2\sqrt{1 + \alpha^2} + \sum_{k=1}^{m} p_k \cdot 2 \cdot (1 + x_k) + \sum_{k=m+1}^n p_k \cdot 2 \cdot \sqrt{1 + y_k^2}  \nonumber \\
\textit{subject to:   } \alpha &=& \sum_{k=1}^m p_k \cdot x_k + \sum_{k=m+1}^n p_k \cdot y_k ~~,~~ x_k \in [-1/2,0) ~~,~~ y_k \in [0,1] ~, \nonumber \\
1 &=& \sum_{k=1}^n p_k  ~~,~~ p_k \geq 0 ~.
\end{eqnarray}
One can show that this program is maximized for $p_k = 0$ for all $k \in \{1,2,...,m\}$. Furthermore, only two terms are needed in the second sum, i.e., $p_{m+1} \equiv p$ and $p_{m+2} \equiv 1-p$.
In order to maximize $c(\alpha)$ we have to set $y_{m+1} = 1$ and $y_{m+2} = 0$. Putting these values in the first constraint gives us for $p$ the following value
\begin{equation}
p = \alpha ~.
\end{equation}
Inserting all these values into the objective function of the optimization problem results in
\begin{equation}
c(\alpha) = \alpha (2 \sqrt{2} - 2) - 2 \sqrt{1 + \alpha^2} + 2 ~,
\end{equation}
which proves the lemma.
\end{proof}

\renewcommand{\theequation}{C-\arabic{equation}}
% redefine the command that creates the equation no.
\setcounter{equation}{0}  % reset counter 

\section{Proof of Lemma \ref{simultDiagonal}}

The proof of Lemma \ref{simultDiagonal} is taken from \cite{MasanesBlockDiagonal}.

\begin{lemma}
Let $P_0,P_1,Q_0,Q_1$ be four projectors acting on a Hilbert-space $\mathcal{H}$ such that $P_0 + P_1 = I$ and $Q_0 + Q_1 = I$. There exists an orthonormal basis in $\mathcal{H}$ where the four projectors $P_0,P_1,Q_0,Q_1$ are simultaneously block diagonal, in blocks of size $1 \times 1$ or $2 \times 2$. 
\end{lemma}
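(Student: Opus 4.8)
The plan is to prove this as an instance of the classical ``Jordan's lemma'' by reducing everything to the action of two self-adjoint involutions. First I would discard $P_1 = I - P_0$ and $Q_1 = I - Q_0$: any orthonormal basis that makes $P_0$ and $Q_0$ block diagonal automatically makes their complements block diagonal with the same block pattern, so it suffices to treat the two projectors $P := P_0$ and $Q := Q_0$.

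The first substantive step is to produce a self-adjoint operator commuting with both $P$ and $Q$ that splits $\mathcal{H}$. I would take $S := (P-Q)^2 = P + Q - PQ - QP$ and check by a two-line computation that $PS = SP = P - PQP$ and, symmetrically, $QS = SQ$. Hence $\mathcal{H}$ is the orthogonal direct sum of the eigenspaces $\mathcal{H}_\lambda = \ker(S - \lambda I)$, each invariant under both $P$ and $Q$, and it remains to handle a fixed $\mathcal{H}_\lambda$.

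On $\mathcal{H}_\lambda$ I would set $U := (P-Q)|_{\mathcal{H}_\lambda}$ and $V := (P+Q-I)|_{\mathcal{H}_\lambda}$, both self-adjoint. Routine expansions using $P^2 = P$, $Q^2 = Q$ give $U^2 = \lambda I$, $V^2 = (I-S)|_{\mathcal{H}_\lambda} = (1-\lambda) I$, and $UV + VU = 0$; in particular $V^2 \geq 0$ forces $0 \leq \lambda \leq 1$. Moreover $P = \tfrac12(I + U + V)$ and $Q = \tfrac12(I - U + V)$, so $P$ and $Q$ are completely controlled by $U$ and $V$. If $\lambda \in \{0,1\}$ then $U = 0$ (so $P = Q$) or $V = 0$ (so $P + Q = I$), and diagonalizing the one remaining self-adjoint operator gives an orthonormal basis of $\mathcal{H}_\lambda$ consisting of $1 \times 1$ blocks. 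If $0 < \lambda < 1$, then $\hat U := \lambda^{-1/2} U$ and $\hat V := (1-\lambda)^{-1/2} V$ are anticommuting self-adjoint involutions, and I would run the standard swapping argument: $\hat U$ maps the $+1$-eigenspace $W^{+}$ of $\hat V$ isometrically onto the $-1$-eigenspace $W^{-}$, so for an orthonormal basis $\{w_k\}$ of $W^{+}$ the vectors $\{w_k,\, \hat U w_k\}$ form an orthonormal basis of $\mathcal{H}_\lambda$ in which each two-dimensional subspace $\operatorname{span}\{w_k, \hat U w_k\}$ is invariant under $\hat U$ and $\hat V$, hence under $P$ and $Q$. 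Concatenating the bases over all $\lambda$ yields the desired global orthonormal basis, with $P_0, P_1, Q_0, Q_1$ simultaneously block diagonal in blocks of size $1 \times 1$ or $2 \times 2$.

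The step I expect to be the main obstacle is the last one: showing that a pair of anticommuting self-adjoint involutions (equivalently, a representation of the Pauli algebra $M_2(\mathbb{C})$) necessarily decomposes the space into $2 \times 2$ invariant blocks. In finite dimension the eigenspace-swapping argument above is elementary, the only care needed being the verification that $\{w_k, \hat U w_k\}$ really is orthonormal, which follows since $\hat U$ is unitary and $W^{+} \perp W^{-}$. In infinite dimension one would replace the eigenspace decomposition of $S$ by the spectral theorem and the ``$2\times 2$ blocks'' by a direct-integral decomposition; since the application here involves only finitely many qubits, this refinement is unnecessary.
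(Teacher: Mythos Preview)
Your proof is correct and takes a genuinely different route from the paper's (Masanes') argument. The paper simultaneously diagonalizes the commuting triple $Q_0$, $Q_0 P_0 Q_0$, $Q_0 P_1 Q_0$, picks a joint eigenvector $|v\rangle$, and shows that $\operatorname{span}\{P_0|v\rangle, P_1|v\rangle\}$ is a one- or two-dimensional subspace invariant under all four projectors; it then iterates. Your approach instead uses the single commuting operator $S=(P-Q)^2$ to slice $\mathcal{H}$ into eigenspaces and, on each slice, passes to the anticommuting pair $U=P-Q$, $V=P+Q-I$ to reduce the question to the (trivial) representation theory of two anticommuting self-adjoint involutions. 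What your method buys is a clean, uniform explanation of why the blocks cannot exceed size two (the Pauli algebra has only two-dimensional irreducibles), and the degenerate $\lambda\in\{0,1\}$ cases fall out automatically as $P=Q$ or $P+Q=I$. The paper's argument, by contrast, is more concretely constructive---it hands you the block basis vectors $P_0|v\rangle, P_1|v\rangle$ directly---but requires a bit more bookkeeping to verify that the iteration exhausts $\mathcal{H}$ and that the resulting subspaces are orthogonal. Both arguments are standard incarnations of Jordan's lemma; yours is the more algebraic of the two.
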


\begin{proof}
As the three operators $Q_0$, $(Q_0 P_0 Q_0)$ and $(Q_0 P_1 Q_0)$ do all commute with each other, they can be simultaneously diagonalized \cite{matrixAnalysis}. Let $\vert v \rangle$ be one of their simultaneous eigenvectors. As $Q_0$ and $Q_1$ project onto complementary orthogonal subspaces and $Q_0 \vert v \rangle = \vert v \rangle$, we conclude that $Q_1 \vert v \rangle = 0$. Furthermore, as $P_0 + P_1 = I$, it cannot be the case that $P_0 \vert v \rangle = P_1 \vert v \rangle = 0$.

If $P_0 \vert v \rangle = 0$ then $P_1 \vert v \rangle = \vert v \rangle = Q_0 \vert v \rangle$ and the span of $\vert v \rangle$ (denoted $E_v$) corresponds to a $1 \times 1$ diagonal block in which $P_0, P_1, Q_0, Q_1$ have eigenvalues $0, 1, 1, 0$, respectively. The case $P_1 \vert v \rangle = 0$ is similar.

Consider the case where $P_0 \vert v \rangle \neq 0$ and $P_1 \vert v \rangle \neq 0$. Define the orthogonal vectors $\vert a_1 \rangle = P_0 \vert v \rangle$, $\vert a_2 \rangle = P_1 \vert v \rangle$, and the two-dimensional subspace $E_v = \{\alpha_1 \vert a_1 \rangle + \alpha_2 \vert a_2 \rangle   ~:~ \forall \alpha_1, \alpha_2 \in \mathbb{C} \}$. The fact $\vert v \rangle = \vert a_1 \rangle + \vert a_2 \rangle$ implies $\vert v \rangle \in E_v$. We have 
\begin{eqnarray}
Q_0 \vert a_1 \rangle &=& Q_0 P_0 \vert v \rangle = Q_0 P_0 Q_0 \vert v \rangle = \lambda_1 \vert v \rangle ~, \nonumber \\
Q_0 \vert a_2 \rangle &=& Q_0 P_1 \vert v \rangle = Q_0 P_1 Q_0 \vert v \rangle = \lambda_2 \vert v \rangle \nonumber
\end{eqnarray}
because $\vert v \rangle$ is an eigenvector of $Q_0 P_0 Q_0$ and $Q_0 P_1 Q_0$. The vector $\vert w \rangle = (1 / \lambda_1) \vert a_1 \rangle - (1 / \lambda_2) \vert a_2 \rangle$ is therefore an eigenvector of $Q_0$ and $Q_1$, i.e., $Q_0 \vert w \rangle = 0$ and $Q_1 \vert w \rangle = \vert w \rangle$. Summarizing, the vectors $\vert a_1 \rangle, \vert a_2 \rangle \in E_v$ are simultaneous eigenvectors of $P_0$, $P_1$ and the vectors $\vert v \rangle, \vert w \rangle \in E_v$ are simultaneous eigenvectors of $Q_0$, $Q_1$. Therefore, the subspace $E_v$ corresponds to a $2 \times 2$ simultaneous diagonal block for $P_0, P_1, Q_0, Q_1$. The same can be done with the rest of simultaneous eigenvectors $\vert v \rangle$ as defined above. And analogously, for the simultaneous eigenvectors of $Q_1$, $(Q_1 P_0 Q_1)$ and $(Q_1 P_1 Q_1)$ which are orthogonal to the vectors $\vert w \rangle$ that have appeared in the previous steps. At the end, the direct sum of the subspaces $E_1, E_2, ...$ is $\mathcal{H}$; each subspace $E_i$ of dimension two contains two eigenvectors of each operator $P_0, P_1, Q_0, Q_1$.
\end{proof}


\begin{thebibliography}{99}


\bibitem{entanglementDistillationUpperbound}
A.~Ambainis, K.~Yang, Towards the classical communication complexity of entanglement distillation protocols with incomplete information, \emph{ECCC Report} TR03-082 (2003).

\bibitem{localEnvironmentFidelity}
P.~Badzi\c{a}g, M.\ Horodecki, P.\ Horodecki, R.\ Horodecki, 
Local environment can enhance fidelity of quantum teleportation, \emph{Phys.~Rev.~A} \textbf{62}, 012311 (2000). 

\bibitem{kent}
J.\ Barrett, L.\ Hardy, A.\ Kent, No signalling and quantum key distribution, 
{\em Phys.\ Rev.\ Lett.} \textbf{95}, 010503 (2005).

\bibitem{nonLocalCorr}
J.\ Barrett, N.\ Linden, S.\ Massar, S.\ Pironio, S.\ Popescu, D.\ Roberts, \emph{Phys.~Rev.~A} \textbf{71}, 022101 (2005).

\bibitem{bellInequality}
J.~S.~Bell, On the Einstein-Podolsky-Rosen paradox, \emph{Physics} \textbf{1}, 195-200 (1964).

\bibitem{EntanglementDistillationPossible}
C.~H.~Bennett, H.~J.~Bernstein, S.~Popescu, B.~Schumacher, Concentrating partial entanglement by local operations, \emph{Phys.~Rev.~A} \textbf{53}, 2046-2052 (1996).

\bibitem{fidelityCite}
C.~B.~Bennett, D.\ P.\ DiVincenzo, J.\ A.\ Smolin, W.\ K.\ Wootters, Mixed-state entanglement and quantum error correction, \emph{Phys.~Rev.~A} \textbf{54}, 3824 (1996).

\bibitem{world}
G.\ Brassard, H.\ Buhrman, N.\ Linden, A.\ A.\ M\'ethot, A.\ Tapp, F.\ Unger,  A limit on nonlocality in any world in which communication complexity is not trivial,  {\em Physical Review Letters}, Vol.\ 96 (2006).

\bibitem{NLnotQuantum}
A.~Cabello, Proposed experiment to test the bounds of quantum correlations, \emph{Phys.~Rev.~Lett.} \textbf{92}, 060403 (2004).

\bibitem{CHSH}
J.~F.~Clauser, M.~A.~Horne, A.~Shimony, R.~A.~Holt, Proposed experiment to test local hidden-variable theories, \emph{Phys.~Rev.~Lett.} \textbf{23}, 880-884 (1969).



\bibitem{epr}
A.~Einstein, B.~Podolsky, N.~Rosen.
\newblock Can quantum-mechanical description of physical reality be considered
  complete?,
\newblock {\em Physical Review} \textbf{47}, 777--780 (1935).



\bibitem{FosterBox}
M.~Forster, S.~Winkler, S.~Wolf, Distilling non-locality, \emph{arXiv:0809.3173} (2008).


\bibitem{matrixAnalysis}
R.~A.~Horn, C.~R.~Johnson, Matrix analysis, \emph{Cambridge University Press} (1985).

\bibitem{violatingBellInequality}
R.~Horodecki, P.~Horodecki, M.~Horodocki, Violating Bell inequality by mixed spin-1/2 states: necessary and sufficient condition, \emph{Phys.~Lett.~A} \textbf{200}, 340-344 (1995).

\bibitem{MasanesBlockDiagonal}
L.~Masanes, Asymptotic violation of Bell inequalities and distillability, \emph{Phys.~Rev.~Lett.}~\textbf{97}, 050503 (2006).



\bibitem{NonlocalityAxiom}
D.~Rohrlich, S.~Popescu, Nonlocality as an axiom for quantum theory, \emph{Found.~Phys.} \textbf{24}, 379 (1994).



\bibitem{TsirelsonBound}
B.~S.~Tsirelson, Quantum generalization of Bell's inequality, \emph{Lett.~in~Math.~Phys.} \textbf{4}, 93-100 (1980).

\bibitem{DrYang} K.\ Yang, On the (im)possibility of non-interactive correlation distillation, 
{\em Theoretical Computer Science},  Vol.~382, No.~2, pp.~157--166 (2007). 

\end{thebibliography}
\end{document}